%%
%% This is file `sample-sigconf.tex',
%% generated with the docstrip utility.
%%
%% The original source files were:
%%
%% samples.dtx  (with options: `sigconf')
%% 
%% IMPORTANT NOTICE:
%% 
%% For the copyright see the source file.
%% 
%% Any modified versions of this file must be renamed
%% with new filenames distinct from sample-sigconf.tex.
%% 
%% For distribution of the original source see the terms
%% for copying and modification in the file samples.dtx.
%% 
%% This generated file may be distributed as long as the
%% original source files, as listed above, are part of the
%% same distribution. (The sources need not necessarily be
%% in the same archive or directory.)
%%
%% The first command in your LaTeX source must be the \documentclass command.
\documentclass[sigconf]{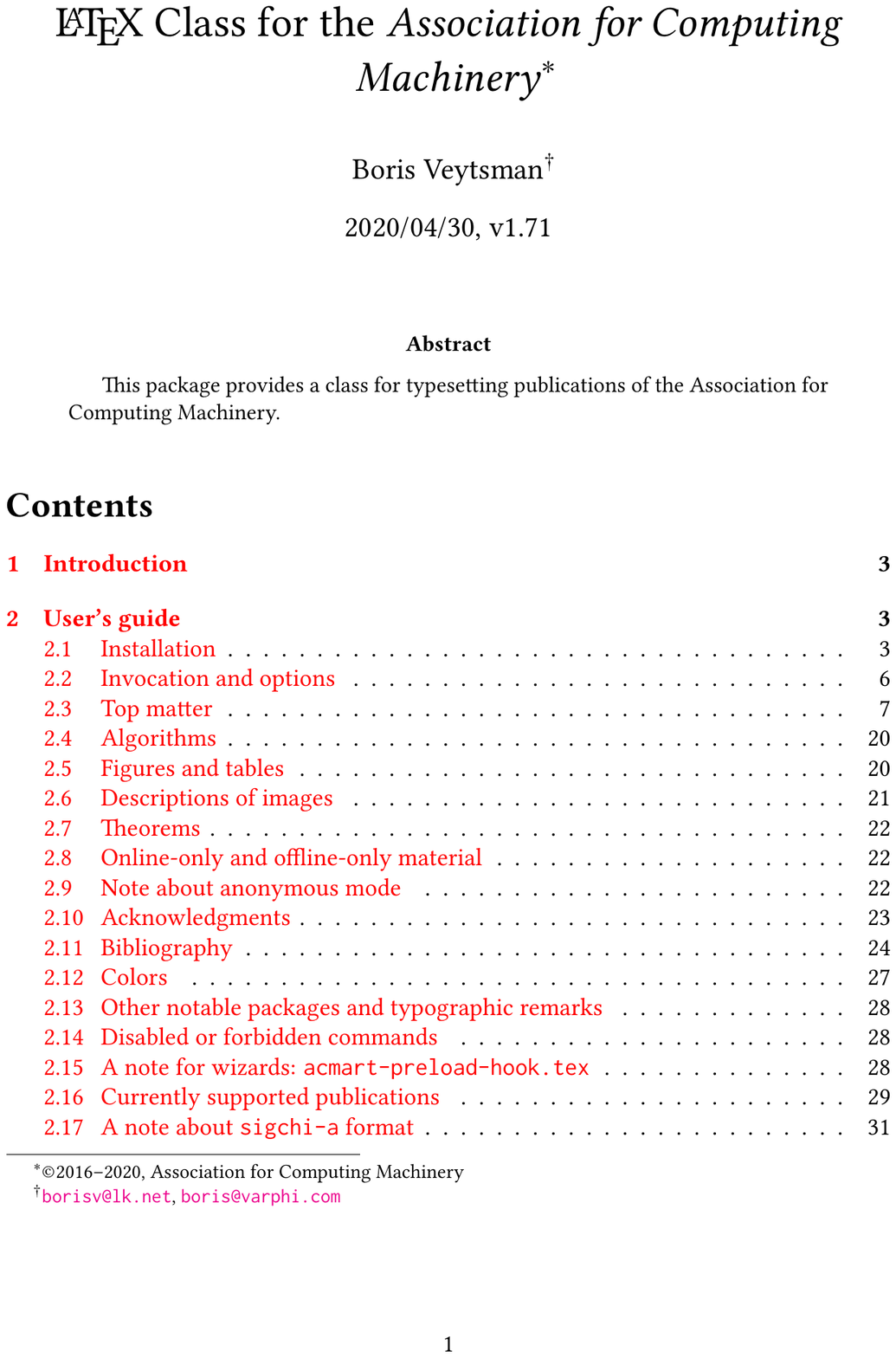}

% \settopmatter{printacmref=false} % Removes citation information below abstract
% \renewcommand\footnotetextcopyrightpermission[1]{} % removes footnote with conference information in first column
% \pagestyle{plain} % removes running headers

\usepackage[utf8]{inputenc} % allow utf-8 input
\usepackage[T1]{fontenc}    % use 8-bit T1 fonts
\usepackage{hyperref}       % hyperlinks
\usepackage{url}            % simple URL typesetting
\usepackage{booktabs}       % professional-quality tables
\usepackage{amsfonts}       % blackboard math symbols
\usepackage{nicefrac}       % compact symbols for 1/2, etc.
\usepackage{microtype}      % microtypography
\usepackage{xcolor}

\usepackage{algorithm, algorithmic}
\usepackage{graphicx, subfigure}
\usepackage{multirow}
\usepackage{amsmath}
\newcounter{funct}
\makeatletter
\newenvironment{funct}[1][htb]{%
  \let\c@algorithm\c@funct
  \renewcommand{\ALG@name}{Functionality}% Update algorithm name
  \begin{algorithm}[#1]%
  }{\end{algorithm}
}
\makeatother
%%
%% \BibTeX command to typeset BibTeX logo in the docs
\AtBeginDocument{%
  \providecommand\BibTeX{{%
    \normalfont B\kern-0.5em{\scshape i\kern-0.25em b}\kern-0.8em\TeX}}}

%% Rights management information.  This information is sent to you
%% when you complete the rights form.  These commands have SAMPLE
%% values in them; it is your responsibility as an author to replace
%% the commands and values with those provided to you when you
%% complete the rights form.
\setcopyright{acmcopyright}
\copyrightyear{2018}
\acmYear{2018}
\acmDOI{10.1145/1122445.1122456}

%% These commands are for a PROCEEDINGS abstract or paper.
% \acmConference[Woodstock '18]{Woodstock '18: ACM Symposium on Neural
%   Gaze Detection}{June 03--05, 2018}{Woodstock, NY}
% \acmBooktitle{Woodstock '18: ACM Symposium on Neural Gaze Detection,
%   June 03--05, 2018, Woodstock, NY}
% \acmPrice{15.00}
% \acmISBN{978-1-4503-XXXX-X/18/06}

%%
%% Submission ID.
%% Use this when submitting an article to a sponsored event. You'll
%% receive a unique submission ID from the organizers
%% of the event, and this ID should be used as the parameter to this command.
%%\acmSubmissionID{123-A56-BU3}

%%
%% The majority of ACM publications use numbered citations and
%% references.  The command \citestyle{authoryear} switches to the
%% "author year" style.
%%
%% If you are preparing content for an event
%% sponsored by ACM SIGGRAPH, you must use the "author year" style of
%% citations and references.
%% Uncommenting
%% the next command will enable that style.
%%\citestyle{acmauthoryear}

%%
%% end of the preamble, start of the body of the document source.
\begin{document}

%%
%% The "title" command has an optional parameter,
%% allowing the author to define a "short title" to be used in page headers.
\title{MPC-enabled Privacy-Preserving Neural Network Training against Malicious Attack}

%%
%% The "author" command and its associated commands are used to define
%% the authors and their affiliations.
%% Of note is the shared affiliation of the first two authors, and the
%% "authornote" and "authornotemark" commands
%% used to denote shared contribution to the research.
\author{Ziyao Liu}
\affiliation{%
  \institution{Nanyang Technological University}
}
\email{ziyao002@e.ntu.edu.sg}

\author{Ivan Tjuawinata}
\affiliation{%
  \institution{Nanyang Technological University}
}
\email{ivan.tjuawinata@ntu.edu.sg}

\author{Chaoping Xing}
\affiliation{%
  \institution{Shanghai Jiaotong University}
}
\email{xingcp@sjtu.edu.cn}

\author{Kwok Yan Lam}
\affiliation{%
  \institution{Nanyang Technological University}
}
\email{kwokyan.lam@ntu.edu.sg}

% \author{Anonymous authors}
%  \affiliation{
%   \institution{Paper under double-blind review}
%  }

%%
%% By default, the full list of authors will be used in the page
%% headers. Often, this list is too long, and will overlap
%% other information printed in the page headers. This command allows
%% the author to define a more concise list
%% of authors' names for this purpose.
\renewcommand{\shortauthors}{Liu, et al.}

% \renewcommand{\shortauthors}{Anonymous Author, et al.}

%%
%% The abstract is a short summary of the work to be presented in the
%% article.
\begin{abstract}
The application of secure multiparty computation (MPC) in machine learning, especially privacy-preserving neural network training, has attracted tremendous attention from the research community in recent years. MPC enables several data owners to jointly train a neural network while preserving the data privacy of each participant. However, most of the previous works focus on semi-honest threat model that cannot withstand fraudulent messages sent by malicious participants. In this paper, we propose an approach for constructing efficient $n$-party protocols for secure neural network training that can provide security for all honest participants even when a majority of the parties are malicious. Compared to the other designs that provide semi-honest security in a dishonest majority setting, our actively secure neural network training incurs affordable efficiency overheads of around 2X and 2.7X in LAN and WAN settings, respectively. Besides, we propose a scheme to allow additive shares defined over an integer ring  $\mathbb{Z}_N$ to be securely converted to additive shares over a finite field $\mathbb{Z}_Q$, which may be of independent interest. Such conversion scheme is essential in securely and correctly converting shared Beaver triples defined over an integer ring generated in the preprocessing phase to triples defined over a field to be used in the calculation in the online phase.
\end{abstract}

%%
%% The code below is generated by the tool at http://dl.acm.org/ccs.cfm.
%% Please copy and paste the code instead of the example below.
%%
\begin{CCSXML}
<ccs2012>
   <concept>
       <concept_id>10002978.10002991.10002995</concept_id>
       <concept_desc>Security and privacy~Privacy-preserving protocols</concept_desc>
       <concept_significance>500</concept_significance>
       </concept>
 </ccs2012>
\end{CCSXML}

\ccsdesc[500]{Security and privacy~Privacy-preserving protocols}

%%
%% Keywords. The author(s) should pick words that accurately describe
%% the work being presented. Separate the keywords with commas.
\keywords{Secure multi-party computation, neural network training, SPDZ, share conversion, malicious attack}

%% A "teaser" image appears between the author and affiliation
%% information and the body of the document, and typically spans the
%% page.
% \begin{teaserfigure}
%   \includegraphics[width=\textwidth]{}
%   \caption{Seattle Mariners at Spring Training, 2010.}
%   \Description{Enjoying the baseball game from the third-base
%   seats. Ichiro Suzuki preparing to bat.}
%   \label{fig:teaser}
% \end{teaserfigure}

%%
%% This command processes the author and affiliation and title
%% information and builds the first part of the formatted document.
\maketitle

\section{Introduction}
\label{Intro}
During this last decade, with the development of machine learning, especially deep neural network (DNN), the scenario where different parties, e.g., data owners or cloud service providers, jointly solve a machine learning problem while preserving their data privacy has attracted tremendous attention from both academia and industry. Federated learning scheme seems to offer the possibility for distributed privacy-preserving machine learning, focusing on the cross-device and cross-silo setting where multiple clients train local models using their raw data, then aggregate their models under the coordination of a central server \cite{kairouz2019advances}. However, there is still no formal privacy guarantee for this baseline learning model. Therefore, secure multiparty computation (MPC) as a practical, mature privacy-preserving technique aiming to enable multiple parties to jointly evaluating a function is natural to be applied to address such privacy issues of training neural networks in a distributed manner.

\textbf{Previous Results: }
There have been several schemes proposed to perform a distributed neural network. This research direction is arguably pioneered by the design of SecureML by Mohassel and Zhang\cite{mohassel2017secureml}, where they propose several MPC-friendly activation functions to enable neural network training on secret shared data. The study of the neural network on shared data can be classified into several classes based on their goals, the number of parties, and the threat model. First, some models focus on neural network prediction. A scheme designed for neural network prediction called MiniONN \cite{liu2017oblivious} was constructed based on the design of SecureML\cite{mohassel2017secureml}. With the extensive use of packing techniques and additive homomorphic encryption (AHE) cryptosystem along with garbled circuit, Gazelle\cite{juvekar2018gazelle} provides a neural network prediction protocol with a more efficient linear computation. 

Secondly, some other models provide secure neural network training for two active parties providing security against one corrupted party. As discussed above, SecureML \cite{mohassel2017secureml} provides us with a neural network training protocol for two parties, which is secure against a semi-honest adversary controlling one party. 
%Chameleon~\cite{riazi2018chameleon} proposes a different neural network training protocol for two parties. Instead of using the relatively more costly oblivious transfer (OT) protocols, the design relies on the use of external parties that aid the computation without having their own private inputs. In this case, the protocol is proven to be secure against semi-honest adversary under honest majority setting.  
Among the existing designs, ABY~\cite{demmler2015aby} presents a framework of efficient conversion between various two-party computation schemes to support different machine learning algorithms. It is proven to be secure against a semi-honest adversary controlling one party. SecureNN~\cite{wagh2019securenn} relies on more sophisticated protocols and up to two non-colluding external parties to provide neural network training protocols for two active parties. Its security is guaranteed against a semi-honest adversary controlling up to 1 party. 
%QUOTIENT \cite{agrawal2019quotient} deeply integrates OT protocols with advanced neural network techniques such as ternary weight neural network in 2-server setting against semi-honest adversary controlling 1 of the two servers. 
Note that most of the designs in this situation are designed for two parties with only a semi-honest security guarantee, with the number of corrupted parties being at most half of the total number of participating parties. Despite some efforts to extend these designs to a larger number of active parties, the improvement has instead been limited.

Lastly, there have also been some designs dedicated to the system with more than two parties with active security guarantee in dishonest majority setting. Most of these works are based on the SPDZ scheme~\cite{damgaard2012multiparty}. In such works, SPDZ is used to provide several accurate and efficient machine learning algorithms. However, since SPDZ is designed to be a general-purpose MPC scheme, most of the libraries implementing SPDZ does not offer primitives and optimizations for the purpose of neural network training.
%These works such as \cite{chen2019secure} utilizes SPDZ to provide several accurate and efficient machine learning algorithms. Nevertheless, these works only rely on existing libraries, such as SCALE-MAMBA \cite{aly2019scale} and FRESCO \cite{FRESCO}, which do not offer primitives and optimizations for neural network training. 
Therefore, in this work, we present dedicated MPC protocols based on SPDZ for convolutional neural networks (CNN) training and demonstrate that our protocols obtain active security with affordable overheads compared to the existing secure neural network training in a semi-honest setting.

For SPDZ protocol, the main efficiency improvement and active security come from the extensive use of pre-computed Beaver triples with Message authentication code (MAC) (see Section \ref{SPDZ}) to accelerate arithmetic operations. During this last decade, following the initial scheme proposed in BDOZ \cite{bendlin2011semi}, many researchers have been working on protocols of efficient Beaver triple generation in a malicious setting, which are based on either homomorphic encryption schemes such as the original SPDZ~\cite{damgaard2012multiparty} or oblivious transfer such as MASCOT~\cite{keller2016mascot} or SPDZ2k~\cite{cramerspdz2k}. 
%Both HE based schemes such as \cite{damgaard2012multiparty, damgaard2013practical, keller2018overdrive}, and OT based schemes such as \cite{keller2016mascot, cramerspdz2k} offer reasonable efficiency and security. 
Relying on these schemes, all parties can jointly generate Beaver triples over a finite field $\mathbb{F}_Q$ or a ring $\mathbb{Z}_{2^k}$, which can be directly used for MAC checking in the online phase of SPDZ and its variants. Specifically, for HE-based schemes, we have to choose a proper crypto-system to be used in the offline phase.
%, e.g., leveled BGV is used in \cite{damgaard2013practical, keller2018overdrive, orsini2020overdrive2k} for its high performance due to the extensive use of packing technique, e.g., single-instruction multiple data (SIMD) trick. 
A popular cryptosystem to be used in such a situation is the leveled BGV scheme, which has high performance due to the extensive use of packing techniques, e.g., single-instruction multiple data (SIMD) trick.

Unfortunately, Beaver triples generated based on AHE crypto-systems over $\mathbb{Z}_N$, e.g., Paillier \cite{paillier1999public} 
%or DGK \cite{damgard2008homomorphic} 
as used in SecureML \cite{mohassel2017secureml}, cannot be directly used for the verification of standard SPDZ, which is based on a finite field. Therefore, an instance of SPDZ that is based on Paillier or DGK requires a secure scheme to transform the triples generated modulo $\mathbb{Z}_N$ to the underlying field of the SPDZ, $\mathbb{Z}_Q.$

\textbf{Our contributions: } In this work, we propose the construction of efficient $n$-party protocols for secure CNN training in malicious majority setting, including linear and convolutional layer, Rectified Linear Unit (ReLU) layer, Maxpool layer, normalization layer, dropout layer, and their derivatives. In addition, we present a secure conversion scheme for shares defined over an integer ring $\mathbb{Z}_N$ to shares over a prime field $\mathbb{Z}_Q$, which can also be used to convert shared Beaver triples correctly. We believe that this result may be of independent interest. Our experimental results show that our protocols for secure neural network training provide affordable overheads of around 2X and 2.5X in LAN and WAN settings, respectively, compared with existing schemes in the semi-honest setting.

\textbf{Organisation of the paper: } The rest of the paper is organized as follows. In Section \ref{Preliminaries}, we provide notations and threat model used in this paper, as well as a brief discussion of secure computation and neural network. In Section \ref{SupProtocol}, we introduce several supporting protocols, including Distributed Paillier crypto-system, SPDZ protocol, and protocols of secure computation of fixed-point numbers. Section \ref{ProposedProtocols} contains our MPC protocols, which can be used to construct an efficient, secure neural network protocol, as illustrated in Section \ref{NNMPC}. Then we analyze the performance of our protocol in Section \ref{ComR}. Finally, we present our experimental evaluation in Section \ref{Exp} and give a conclusion in Section \ref{Conclu}.

\section{Preliminaries}
\label{Preliminaries}

\subsection{Neural Network Training}
\label{NN}

A neural network consists of many layers with nodes defined by a set of linear operations, such as addition and multiplication, and non-linear operations such as ReLU, Maxpool, and dropout. At a very high level, we represent a neural network as a function $\mathbf{w} = f(\mathbf{x})$ where $\mathbf{x}$ represents the set of input data with their respective labels, $\mathbf{w}$ represents the set of weights of the neural network and function $f$ can be represented with linear operations and non-linear operations as mentioned above. The target of training a neural network is to obtain these weights $\mathbf{w}$, which can be used to map a new unlabeled data $\mathbf{x}^\ast$ to its predicted label $\mathbf{\ell}^\ast$, i.e., prediction.

\subsection{Secure Multiparty Computation}
\label{SecureComput}

Privacy-preserving technology provides a privacy guarantee for data for various purposes, such as for computation or for publishing. This technology broadly encompasses all schemes for privacy-preserving function evaluations, including but not limited to differential privacy (DP), secure multiparty computation (MPC), and homomorphic encryption (HE). It is well known that DP provides a tradeoff between accuracy and privacy that can be mathematically analyzed while MPC and HE offer cryptographic privacy but with high communication or computation overheads. In this work, we focus on MPC to construct efficient neural network training protocols. Thanks to both theoretical and engineering breakthroughs, MPC has moved from pure theoretical interests to practical implementations.

In addition to pure homomorphic encryption-based MPC, there are two schemes that can be used to construct MPC protocols, i.e.,  circuit garbling and secret sharing. Circuit garbling, as used in SecureML \cite{mohassel2017secureml}, involves encrypting and decrypting keys in a specific order, while the latter emulates a function evaluation more efficiently based on the "secretly-shared" inputs between all parties. Our work leverages an additive secret sharing MPC protocol called SPDZ (see Section \ref{SPDZ}), such that for each data $x$, it will be randomly split up into $n$ pieces and then be distributed among $n$ parties (see Algorithm \ref{alg:Resharing} in Section \ref{SPDZ}). In the rest of the paper, we write $[x]^t = [\left \langle x_1 \right \rangle ^ t, ..., \left \langle x_n \right \rangle ^ t]$ to denote  $x\in\mathbb{Z}_t$ being secretly shared between all parties such that each $P_i$ holds $\langle x_i\rangle^t$ and $x=\sum_{i=1}^n\langle x_i\rangle^t.$ For simplicity of notation, when the context is clear, we abuse the notation and use $x_i$ instead of $\langle x_i\rangle^t$, the share of $x$ owned by party $P_i$. Similarly, when the underlying space $\mathbb{Z}_t$ is clear from the context, we write $[x]$ instead of $[x]^t.$

\subsection{Threat Model and Security}
\label{ThreatModel}

In many real-life neural network applications, the training data are distributed across multiple parties, which are independent business entities and are required to comply with the applicable data privacy regulations. Therefore, due to the competitive nature of business organizations, in this work, we consider the scenario where a majority of parties may collude to obtain the data from the other parties by sending fraudulent messages to them. Such a threat model is the same as the one used in SPDZ, i.e., security against a malicious adversary controlling up to $n-1$ parties. This means that in an $n$-party setting, such MPC protocol is secure even if $n-1$ parties are corrupted by a malicious adversary. Such threat model is different from that of MPC protocols in SecureML \cite{mohassel2017secureml} and SecureNN \cite{wagh2019securenn}, which are against semi-honest adversary. As demonstrated in \cite{cramer2015secure}, semi-honest protocols can be elevated into the malicious model, which may incur infeasible cost overhead. However, thanks to the online-offline architecture of SPDZ, such overhead can be moved from the online phase to the offline phase, and thus the amortized efficiency of function evaluation can be improved. Our security definition is based on the Universal Composability (UC) framework, and we refer interested readers to \cite{canetti2001universally} for the details. 

Intuitively, to show the security of our protocol $\Pi,$, we first define an ideal functionality $\mathcal{F}$, which performs the same calculation of $\Pi$ but with the existence of a trusted third party to help in the computation truthfully. $\Pi$ is said to securely realize $\mathcal{F}$ in the presence of an adversary $\mathcal{A}$ if for any possible adversary $\mathcal{A}$ against the protocol $\Pi,$ there exists a simulator $\mathcal{S}$ which interacts with the ideal functionality $\mathcal{F}$ such that given the interactions of $\mathcal{A}$ with $\Pi$ and $\mathcal{S}$ with $\mathcal{F},$ there is no environment in which the two interactions can be distinguished. Furthermore, we are also using a hybrid model in our security proof. Intuitively such a model is especially useful when a protocol $\Pi$ is built based on other protocol $\Pi^\prime.$ Suppose that $\Pi$ and $\Pi^\prime$ are supposed to simulate $\mathcal{F}$ and $\mathcal{F}^\prime$ respectively. We say $\Pi$ securely realizes $\mathcal{F}$ in $\mathcal{F}^\prime$ hybrid model if the protocol $\Pi$ where any instances of $\Pi^\prime$ is replaced by $\mathcal{F}^\prime$ securely realizes $\mathcal{F}.$ We note that this implies that $\Pi$ itself is secure if combined with the proof that $\Pi^\prime$ securely realizes $\mathcal{F}^\prime.$

The correctness and security of our proposed protocol depend on the supporting building blocks, i.e., distributed Paillier crypto-system, SPDZ, and secure computation of fixed-point numbers. Data representation follows the format in \cite{catrina2010secure}, and we use the standard SPDZ scheme over a finite field $\mathbb{Z}_Q$. The subprotocol of SPDZ involved in this works include protocol for data resharing $\mathrm{Resharing(\cdot)}$, multiplication $\mathrm{MulTri(\cdot)}$, Paillier based Beaver triple generation $\mathrm{TriGen(\cdot)}$, and MAC checking. The details of all the above mentioned supporting protocols can be found in Section \ref{SupProtocol}.

\section{Supporting Protocols}
\label{SupProtocol}
\subsection{Distributed Paillier Cryptosystem}
\label{DistributedPaillier}

Paillier \cite{paillier1999public} is a public key encryption scheme that possesses partial homomorphic property. The public key is $N = p \cdot q$ and the secret key is $(p,q)$ pair where $p$ and $q$ are large primes. First, we fix $g$ to be a random invertible integer modulo $N^2$. The encryption of a message $m$ is defined to be $c=E(m)=g^{m} r^{N}\pmod {N^{2}}$ for a randomly chosen invertible $r \in \mathbb{Z}_N$. The decryption function is $D(c)=\frac{L\left(c^{\lambda}\pmod{N^{2}}\right)}{L\left(g^{\lambda}\pmod{ N^{2}}\right)}\pmod{N}$ where function $L$ is defined as $L(x)=\frac{x-1}{N}$ 
and $\lambda = \varphi(N)$ where $\varphi$ is Euler's totient function. Paillier supports homomorphic addition between two ciphertexts and homomorphic multiplication between a plaintext and a ciphertext, in particular, $E(m_1) \cdot E(m_2)\pmod {N^2} = E(m_1 + m_2 \pmod {N})$ and $E(m_1)^{m_2}\pmod {N^2} = E(m_1 \cdot m_2 \pmod {N})$. For simplicity, we denote the following two functions: $E(m_1+m_2\pmod N) = PAdd(E(m_1),E(m_2))$ and $E(m_1\cdot m_2 \pmod N)= PMult(E(m_1),m_2)$. We can easily generalize the two notations such that $E(m_1+m_2+\cdots+m_r \pmod N)=PAdd(E(m_1),E(m_2),\cdots, E(m_r))$ and $E(m_1\cdot m_2 \cdot \cdots \cdot m_r\pmod N)=PMult(E(m_1),m_2,\cdots, m_r)$. Due to the invertibility of $g$ and $r$ modulo $N^2,$ it is easy to see that a Paillier ciphertext is invertible modulo $N^2.$ Hence, if $c=E(m),$ we will also have $c^{-1}\pmod{N^2}$ to be a valid encryption of $-m.$ We denote $PInv(E(m))= E(-m\pmod{N})\equiv E(m)^{-1}\pmod{N^2}.$ Using $PInv,$ Paillier can support homomorphic subtraction, $E(m_1-m_2\pmod{N})=PAdd(E(m_1),PInv(E(m_2)))$. These homomorphic properties enable several protocols proposed in Section \ref{SPDZ} and Section \ref{ProposedProtocols}. However, in terms of MPC scenario, the secret key pair is not allowed to be owned by any party. To keep the hardness of composite residuosity used for the security of Paillier, the value of $p$ and $q$ can not be known by anyone. Hence we need to generate the public key in distributed manner and keep its factors secret while still enabling joint decryption to be done without revealing the private key. Fortunately, such distributed Paillier does exist. Distributed Paillier key generation includes two sub-protocols, i.e., (i) distributed RSA modulus generation, and (ii) distributed biprimality test to verify the validity of generated RSA modulus in (i). We aim to have a secure distributed Paillier cryptosystem which provides security against malicious adversary controlling up to $n-1$ parties in an $n$-party setting. Thus we rely on the scheme proposed in~\cite{hazay2019efficient} to achieve this.
%Inspired by \cite{boneh1997efficient} which proposed the first RSA modulus generation protocol in multiparty setting, several works \cite{frankel1998robust,nishide2010distributed} provides solutions in different threat models. Note that to ensure our protocol is secure against malicious adversaries, we have to guarantee that all the sub-protocols are also secure against the same threat model in the $n$-party setting. Thus we rely on the scheme proposed in \cite{hazay2019efficient}.
We denote the Paillier cryptosystem with plaintext space $\mathbb{Z}_{N}$ as $Paillier_{N}$, as well as its encryption and distributed decryption, i.e., $Enc_{N}(\cdot)$, $Dec_{N}(\cdot)$.

\subsection{SPDZ}
\label{SPDZ}

SPDZ is a well-known secret sharing based MPC protocol against malicious majority proposed in \cite{damgaard2012multiparty}. Following this initial somewhat homomorphic encryption (SHE) based work, several variants are proposed, see, for example, ~\cite{keller2016mascot,cramerspdz2k}.
%including its improved version \cite{damgaard2013practical}, OT based version called MASCOT \cite{keller2016mascot}, improved SHE based version called Overdrive \cite{keller2018overdrive}, and the versions over $\mathbb{Z}_{2^k}$ such as SPDZ2k \cite{cramerspdz2k} and Overdrive2k \cite{orsini2020overdrive2k}. 
We will refer to all of these variants in the SPDZ family as SPDZ.

SPDZ consists of a pre-processing or offline phase, which is independent of both the input data and the very efficient online phase for function evaluation. In the offline phase, all parties jointly generate some "raw materials", typically the Beaver triples. In the online phase, the parties only need to exchange some shares and perform some efficient verification. The active security is guaranteed by the MAC, which enables the validation of parties' behavior during computation. In the rest of this section, several important techniques in SHE based SPDZ are introduced in order to construct some higher-level protocols proposed in Section \ref{ProposedProtocols}.

\textbf{Data Resharing: } Given $Enc_{t}(x)$, all parties can follow the protocol given in Algorithm \ref{alg:Resharing} to obtain $[x]^t$ as the shares of $x$. We note that this resharing of an encrypted value is only done during pre-processing phase to help in the generation of auxiliary values, and it is not used in the sharing protocol of private inputs of the function $f.$ For the sharing of private inputs during the online phase of the computation, it follows the protocol given in Algorithm~\ref{alg:Sharing}.

\begin{algorithm}[htbp]
\caption{Data Resharing: $[x]^t \leftarrow \mathrm{Resharing}(Enc_{t}(x))$}
\label{alg:Resharing}
\begin{algorithmic}[1]
\STATE Each party $P_{j}$ publishes $Enc_{t}(r_j)$, where $r_j$ is uniformly selected from $\mathbb{Z}_t;$
\STATE All parties calculate $Enc_t(r+x)=PAdd(Enc_t(r_1),Enc_t(r_2),\cdots, Enc_t(r_n),Enc_t(x))$ using homomorphic addition;
\STATE All parties jointly decrypt $Enc_{t}(r + x)$ to obtain $r+x;$
\STATE $P_1$ sets its share $x_1 = r+x - r_1$, $P_j$ sets its share $x_j = - r_j$ for $j \neq 1;$
\STATE Return $[x]^t;$
\end{algorithmic}
\end{algorithm}

\begin{algorithm}[htbp]
\caption{Data Sharing: $[x]^t \leftarrow \mathrm{Share}(x)$ where $x$ is a private value owned by $P_i$}
\label{alg:Sharing}
\begin{algorithmic}[1]
\REQUIRE A shared random value $[r]^t$
\STATE Each party $P_j$ sends his share of $[r]^t, \langle r\rangle^t_j$ to $P_i$ enabling $P_i$ to recover the value of $r;$
\STATE $P_i$ sets $\langle x\rangle^t_i= x+r-\langle r\rangle^t_i$ and for $j\neq i, P_j$ sets $\langle x\rangle^t_i=-\langle r\rangle^t_j;$
\STATE Return $[x]^t=(\langle x\rangle^t_1,\cdots, \langle x\rangle^t_n);$
\end{algorithmic}
\end{algorithm}

\textbf{Arithmetic operation: } SPDZ is based on secret sharing. Thus there is no communication cost for addition and scaling by a public constant. Multiplication between two secretly shared values is more complex, but we can use the well known Beaver triple trick to accelerate this operation. 
For completeness, we provide the protocol for multiplying two secretly shared values using Beaver triple.
In the following discussion, all values are secretly shared using the additive secret sharing scheme over the same space. Assume that we have generated three secret shared values $[a]$, $[b]$, and $[c]$, called Beaver triples, such that $c = a \cdot b$. Given $[x]$ and $[y]$, all parties can follow the protocol in Algorithm \ref{alg:MulTri} to calculate $[x \cdot y]$. 
%Correctness and security are proven in \cite{beaver1991foundations}. 
Note that all the protocols in SPDZ can be applied to matrices.

\begin{algorithm}[htbp]
\caption{Multiplication based on Beaver triple: $[x \cdot y] \leftarrow \mathrm{MulTri}([x], [y], [a], [b], [c])$}
\label{alg:MulTri}
\begin{algorithmic}[1]
\STATE Each party $P_{i}$ publishes $x_i - a_i$ and $y_i - b_i;$
\STATE Each party $P_{i}$ compute $x - a$ and $y - b;$
\STATE $P_1$ sets its share $z_1 = c_1 + (x-a) \cdot b_1 + (y-b) \cdot a_1 + (x-a)\cdot(y-b)$, $P_j$ sets its share $z_j = c_j + (x-a) \cdot b_j + (y-b) \cdot a_j$ for $j \neq 1;$
\STATE Return $[z];$
\end{algorithmic}
\end{algorithm}

Considering that one triple cannot be used to perform two multiplications for privacy reason, %\cite{bendlin2011semi}
the number of triples we need to generate depends on the number of multiplications we want to complete. Furthermore, these Beaver triples do not depend on inputs data as well as the function to be evaluated, which means they can be generated at any point prior to evaluating the function, i.e., the offline phase in SPDZ, thus enabling a highly efficient online phase.

\textbf{Beaver triple generation: } Algorithm \ref{alg:TriGen} describes the $n$-party protocol for Beaver triple generation based on Paillier such that $[a]^N, [b]^N, [c]^N \leftarrow \mathrm{TriGen}()$. For simplicity of discussion, we only discuss the protocols under the semi-honest setting. As discussed before, such protocols can be made secure against a malicious adversary by the combination of zero-knowledge proof and the standard technique of sacrificing an auxiliary value to check the correctness of another. A more detailed discussion of this technique can be found in\cite{damgaard2012multiparty}. As mentioned in Section \ref{Intro}, these Beaver triples generated using Algorithm \ref{alg:TriGen} are over $\mathbb{Z}_N$, thus cannot be directly in the online phase of SPDZ, which is over a finite field.

\begin{algorithm}[htbp]
\caption{Beaver triple generation based on Paillier: $[a]^N, [b]^N, [c]^N \leftarrow \mathrm{TriGen}()$}
\label{alg:TriGen}
\begin{algorithmic}[1]
\STATE Each party $P_{i}$ publishes $Enc_{N}(a_i)$ and $Enc_{N}(b_i)$, where $a_i$ and $b_i$ are uniformly selected from $\mathbb{Z}_N;$
\STATE Each party $P_i$ computes $Enc_N(a)=PAdd(Enc_N(a_1),\cdots, Enc_N(a_n));$
\STATE Each party $P_i$ computes and publishes $Enc_{N}(a \cdot b_i) = PMult(Enc_N(a),b_i)= Enc_{N}(a)^{b_i}$ using homomorphic multiplication in Paillier;
\STATE Each party $P_i$ computes $Enc_{N}(c) = PAdd(Enc_N(a\cdot b_1),\cdots,Enc_N(a,b_n));$
\STATE All parties call $\mathrm{Resharing}(Enc_{N}(c))$ to get $[c]^N;$
\STATE Return $[a]^N, [b]^N, [c]^N;$
\end{algorithmic}
\end{algorithm}

Note that in step 3 of Algorithm \ref{alg:TriGen}, only $P_i$ knows the value of $b_i$, which enables the homomorphic multiplication between a plaintext and ciphertext (refer Section \ref{DistributedPaillier}), i.e., $Enc_{N}(a)$ and $b_i$, thus no information leaks.

\textbf{MAC checking: } To obtain active security over $\mathbb{Z}_Q$, the main idea of SPDZ is to use unconditional MAC, which enables verification of computation correctness. This authentication scheme prevents parties from cheating on their interactive computation with high probability. In SPDZ, to enable authentication, each private value, including the Beaver triples that are generated, comes with their respective tags. To obtain this, first, the parties agree on a random MAC key $\alpha\in \mathbb{Z}_Q$ which is secretly shared among all parties. To compute the tag of a secretly shared value $[x]^Q$, the parties compute $[\alpha\cdot x]^Q$ and store it along with their shares of $[x]^Q$. We can observe that if some adversaries cheat such that the secretly shared value is changed from $x$ to $x^{\prime}$, they can do so undetected only if they can modify the corresponding tag $[m]$ to $[m^{\prime}]$ such that $m^{\prime} = \alpha.x^{\prime}$. This means that the probability of cheating without being detected is equal to the probability of guessing $\alpha$ correctly, which is inversely proportional to the finite field size $Q.$
When considering a similar scheme over a ring $\mathbb{Z}_N,$ the security is no longer as strong. This is due to the fact that, contrary to $\mathbb{Z}_Q,$ not all non-zero value in $\mathbb{Z}_N$ is invertible. Because of this, the probability that $m^{\prime} = \alpha.x^{\prime}$ becomes larger. For example, if $N=2^k$ and $\alpha=2^{k-1}$, $m^{\prime}$ can only be either 0 or 1 making the probability 1/2. As illustrated above, we use Paillier to generate Beaver triples with MAC, which means all the secret shared values are in $\mathbb{Z}_N$. Hence we have to convert all these shares from $\mathbb{Z}_N$ to $\mathbb{Z}_Q$ while preserving the relationship between them. Note that for any sub-protocols, all the inputs and outputs should always be secretly shared and not be in the clear. In addition to the shares of the outputs $[y]^Q$, the parties should also hold a secret share of their respective tags $[\alpha\cdot y]^Q$.

\subsection{Secure Computation of Fixed Point Numbers}
\label{DataRep}

For typical neural networks, data and weights are represented by floating-point numbers. However, in terms of combining neural networks with cryptographic techniques such as HE and MPC, we have to use a very large finite field to preserve full accuracy.
%\cite{gilad2016cryptonets}
This method supports only a limited number of multiplications to avoid the overflow, which is prohibitive for the neural network where a large number of multiplications are involved. 
In order to avoid a large increase in complexity, we may consider limiting the precision by the use of fixed-point arithmetic and truncation protocol. To achieve this, we extend the protocols provided in~\cite{catrina2010secure} to $n$-party. Such a method is guaranteed to be correct by~\cite{catrina2010secure} while the active security is achieved by the use of the MAC-checking method in SPDZ.
%Fortunately, the authors in \cite{mohassel2017secureml} illustrate that rational numbers can be treated as fixed point number relying on a truncation technique while preserving appropriate accuracy of neural network prediction. Other prior works regarding neural network with MPC, such as \cite{wagh2019securenn} and \cite{agrawal2019quotient} also show their accuracy in terms of truncation. 
%In our work, we follow the same methodology and extend its applications to secure $n$-party truncation, comparison, and arithmetic operations based on the protocols given in \cite{catrina2010improved, catrina2010secure, damgaard2006unconditionally}. Note that the correctness of these protocols are given in above mentioned works and security is guaranteed by the MAC-checking method in SPDZ.

\textbf{Data representation: } Real numbers can be treated as a sequence of digits including integer and fractional parts split by a radix point. More specifically, for any real number $\tilde{x}$, set $e$ and $f$ as positive integers such that $|\tilde{x}|<2^{e-1}$ and the storage accuracy is within $2^{-f}.$ Then we can find a sign bit $s\in\{0,1\}$ and $d_{-f},\cdots, d_{e-2}\in\{0,1\}$ such that $\tilde{x} = (-1)^s \cdot\sum_{i=-f}^{e-2} d_i 2^i.$ To encode $\tilde{x},$ we first encode it as an integer $\bar{x}$ by multiplying it by $2^f.$ Hence we have $\bar{x} = (-1)^s \sum_{i=0}^{e+f-2} d_{i-f} 2^i = 2^f \tilde{x}.$ Next, we set $Q$ to be a prime number that is at least $k+\kappa$ bits where $k=e+f$ and $\kappa$ is the security parameter. To encode $\tilde{x}$ as a field element in $\mathbb{Z}_Q,$ we map $\tilde{x}$ to the element $\bar{x}\in \mathbb{Z}_Q.$ Calculation can then be done using MPC schemes that is over $\mathbb{Z}_Q$. 

\textbf{Truncation and comparison: } In order to maintain the same resolution of secretly shared values and enable comparison, two truncation protocols are used in our work, i.e., probabilistic truncation $\mathrm{TruncPr}$ and deterministic truncation $\mathrm{Trunc}$, as given in \cite{catrina2010secure}. Probabilistic truncation supports efficient truncation as no bit-wise operation is involved, and some "raw materials" needed in the protocol can be prepared during the pre-processing phase. However, it introduces error with a probability depending on the size of the Least Significant Bit (LSB) after truncation. In terms of the data representation of $x$, the Most Significant Bit (MSB) of $x$ determines whether $x$ is greater than $0$ or not, which can be obtained by simply truncating the last $k-1$ least significant bits—since the LSB after truncation is now large, using TruncPr for this purpose will yield a non-negligible error. Hence, an alternative truncation protocol is required. Deterministic truncation is less efficient but enables truncation with zero error probability. Therefore, although probabilistic truncation may be used to avoid overflow during multiplication computation, deterministic truncation is needed for comparison computation. We denote by $\mathrm{GEZ}$, an adapted version of comparison protocol $\mathrm{LTZ}$ following the notation given in \cite{catrina2010improved} such that $\mathrm{GEZ}([x], k) = 1$ if $x \geq 0$, and 0 otherwise, in order to keep the consistency with ReLU function (see Section \ref{NN}).

\textbf{Arithmetic operations: } Addition and public scaling on additive shares can be done without interaction while maintaining the same resolution. Multiplication can be done using Beaver triple method (see Section \ref{SPDZ}) with its resolution changing from $2^{-f}$ to  $2^{-2f}$, which means probabilistic truncation is needed. Protocols of the division with public divisor and secretly shared divisor are also given in \cite{catrina2010secure} that offers reasonable accuracy and efficiency.

\section{Proposed Protocols}
\label{ProposedProtocols}

In this section, we describe protocols to support Beaver triple conversion and neural network training. We assume that there are two distributed Paillier crypto-systems with different plaintext space $\mathbb{Z}_N$ and $\mathbb{Z}_{N^{\prime}}$. Based on the definition of Paillier and the notation introduced in Section \ref{DistributedPaillier}, let $N = p \cdot q$ where $2^{\kappa} < p, q < 2^{\kappa + 1}$, and $N^{\prime} = p^{\prime} \cdot q^{\prime}$ where $2^{2\kappa} < p^{\prime}, q^{\prime} < 2^{2\kappa+1}$, such that $N^{\prime} > N^2$. All the multiplications involved in these protocols can be done following a similar multiplication protocol in Paillier based Beaver triple generation given in Section \ref{SPDZ}. In the rest of the paper, we write $[x]^t = [\left \langle x_1 \right \rangle ^ t, \cdots, \left \langle x_n \right \rangle ^ t]$ to denote  $x\in\mathbb{Z}_t$ being secretly shared between all parties such that each $P_i$ holds $\langle x_i\rangle^t$ and $x\equiv\sum_{i=1}^n\langle x_i\rangle^t\pmod{t}.$ For simplicity of notation, when the context is clear, we abuse the notation and use $x_i$ instead of $\langle x_i\rangle^t$, the share of $x$ owned by party $P_i$. Similarly, when the underlying space $\mathbb{Z}_t$ is clear, we write $[x]$ instead of $[x]^t.$

\subsection{Comparison Modulo $N$} \label{subsec:comp}

The first supporting protocol that we want to introduce is the secure comparison protocol $\mathrm{GEZ}$ over $\mathbb{Z}_N.$ More specifically, this function receives a secretly shared value $[x]^N$ and the bit length $k$ of $x.$ It then outputs $[s]^N$ where $s=1$ if $x\geq 0$ and $0$ otherwise. Note that this function can be used to compare two secretly shared values $[x]^N,[y]^N$ by computing $\mathrm{GEZ}([x-y]^N,k).$ This algorithm is an adapted version of $\mathrm{LTZC}$ protocol in \cite{securescm}, which is  based on the following remark: for x with $k$-bit length (refer to data representation in Section \ref{DataRep}), if $x < 0$ then $\lfloor x/2^{k-1} \rfloor = 1$, and if $x \geq 0$ then $\lfloor x/2^{k-1} \rfloor = 0$. Note that the protocol $\mathrm{GEZ}$ is essentially the same as the $\mathrm{LTZC}$ protocol, but its supporting sub-protocols including random bit generation $\mathrm{RndBit}$, random integer generation $\mathrm{PRndInt}$ and inequality test $\mathrm{BitLTC}$, are required to be defined over ring rather than finite field. Thus, $\mathrm{LTZC}$ cannot be directly applied to define $\mathrm{GEZ}$ due to the difference in the underlying space. In order to have a similar protocol to $\mathrm{LTZC}$ over $\mathbb{Z}_N,$ we first describe some sub-protocols that are required, namely random bit generation, random invertible integer generation, and random integer generation modulo $\mathbb{Z}_N.$

Algorithm \ref{alg:RndBit} describes our $n$-party protocol for random bit generation over $\mathbb{Z}_N.$ The standard protocol in $\mathbb{Z}_Q$ setting, such as $\mathrm{RAN_2()}$ in \cite{damgaard2006unconditionally}, 
%is to take an inverse of root from squaring modulo $Q$ by dividing the initial value, i.e., $x/\sqrt{x^2} \in \{-1, 1\}$. 
relies on the fact that the number of roots of the quadratic polynomial $x^2-a$ over the field $\mathbb{Z}_Q$ is exactly $2$ for any non-zero $a\in \mathbb{Z}_Q.$
However, this may no longer work when working modulo $N.$ More specifically, for any given non-zero quadratic residue $a\in \mathbb{Z}_N,$ there exactly $4$ elements $x_1,x_2,x_3,x_4\in\mathbb{Z}_N$ such that $x_i^2\equiv a\pmod{N}$ for $i=1,\cdots, 4.$ Instead, we will be using XOR to produce a shared random bit from the random bits generated by the $n$ parties. To simplify the discussion of $\mathrm{RndBit},$ we assume that $n=2^\ell$ for some positive integer $\ell.$ The full specification of $\mathrm{RndBit}$ in this special case can be found in Algorithm~\ref{alg:RndBit}. The full specification of the functionality as well as claims for its correctness and security are discussed in Appendices~\ref{subapp:compfunct} and~\ref{subapp:compcorsec} respectively.

\begin{algorithm}[htbp]
\caption{Random bit generation over $\mathbb{Z}_N$: $[a]^N \leftarrow \mathrm{RndBit}(N)$}
\label{alg:RndBit}
\begin{algorithmic}[1]
%\STATE Each party $P_i$ generates a uniformly random bit $a_{0,j,k}\in \{0,1\}$ and computes $c_i=Enc_N(a_i).$ Let $b_1=a_1;$
%\STATE $P_1$ generates a random bit $r_1\in\{0,1\}$ 
%\FOR{$i=1,\cdots, n-1$}
%\STATE $P_i$ sends $d_i=Enc(b_i)$ to $P_{i+1}$
%\STATE $P_{i+1}$ calculates $d_{i+1}=Enc(b_{i+1})=Enc(b_i\oplus a_{i+1})=Enc(b_i+a_{i+1}-2b_ia_{i+1})$ $= PAdd(d_i,c_{i+1},PMult(PInv(d_i),2a_{i+1}))$
%\ENDFOR
%\STATE Return $[a]^N\leftarrow Resharing(Enc(b_n)).$
%\STATE Each party $P_i$ generates a uniformly random bit $a_i\in\{0,1\}$ and computes $c_i=Enc_N(a_i);$
%\STATE $P_1$ generates a random bit $r_1\in \{0,1\},$ sets $b_1=a_1\oplus r_1, e_1=Enc_N(r_1),$ and $d_1=Enc_N(a_1\oplus r_1);$
%\STATE $P_1$ sends $d_1$ to $P_2;$
%\FOR{$i=2,\cdots, n-1$}
%\STATE Upon receiving $d_{i-1}=Enc_N(b_{i-1})$ where $b_{i-1}=r_1\oplus a_1\oplus \cdots \oplus a_{i-1}$ from $P_{i-1}, P_i$ calculates $d_i=Enc_N(b_i)=Enc_N(b_{i-1}\oplus a_i)=Enc_N(b_{i-1}+a_i-2a_ib_{i-1})= PAdd(d_{i-1},c_{i}, PMult(PInv(d_{i-1}),2a_i))$ and sends $d_i$ to $P_{i+1};$
%\ENDFOR
%\STATE Upon receiving $d_{n-1}=Enc_N(b_{n-1})$ where $b_{n-1}=r_1\oplus a_1\oplus \cdots\oplus a_{n-1}$ from $P_{n-1}, P_n$ calculates $d_n=Enc_N(b_n)=Enc_N(b_{n-1}\oplus a_n)$ and sends it to $P_1;$
%\STATE Upon receiving $d_n$ from $P_n, P_1$ calculates $d=Enc_N(b_n\oplus r_1)$ and publishes $d;$
%\STATE Return $[a]^N\leftarrow Resharing(Enc(b_n));$
\STATE Each party $P_i$ generates a uniformly random bit $a^{(0)}_i\in \{0,1\}$ and secretly shares it to other parties $\left[a^{(0)}_i\right]^N;$
\FOR{$d=1,\cdots,\ell$}
\FOR{$i=0,\cdots, 2^{\ell-d}-1$}
\STATE The parties calculate $\left[a^{(d)}_i\right]^N=\left[a^{(d-1)}_{2i}\right]^N+\left[a^{(d-1)}_{2i+1}\right]^N-2\cdot\left[a^{(d-1)}_{2i}\right]^N\cdot \left[a^{(d-1)}_{2i+1}\right]^N;$
\ENDFOR
\ENDFOR
\STATE The parties calculate $[check]^N=\left[a_0^{(\ell)}\right]^N\cdot \left(1-\left[a_0^{(\ell)}\right]^N\right)$ and reveal $check;$
\IF{$check\neq 0$}
\STATE The protocol is aborted without any output;
\ELSE
\STATE Output $[a]^N=\left[a_0^{(\ell)}\right]^N;$
\ENDIF
\end{algorithmic}
\end{algorithm}

Algorithm \ref{alg:RndInv} describes our $n$-party protocol for random integer with inverse generation over $\mathbb{Z}_N$, which is an adapted version of $\mathrm{PRandInv}$ in \cite{securescm}. Algorithm~\ref{alg:RndInv} is built on another secure protocol $\mathrm{RndInt}(N),$ which is a secure protocol realizing the functionality $\mathcal{F}_{RndInt}.$ We note that $\mathcal{F}_{RndInt}$ is a functionality that takes an RSA modulus $N$ as input and outputs a uniformly random element of $\mathbb{Z}_N.$ This can be done by letting each party $P_i$ deal a sharing $[r_i]^N$ and $[r]^N$ is defined to be $[r]^N=\sum_{i=1}^N[r_i]^N.$ Note that as long as there is one honest party, the resulting $[r]^N$ can be proved to be uniformly distributed in $\mathbb{Z}^N$. The correctness of the algorithm is straightforward since $u$ is invertible if and only if $x$ and $y$ are also invertible. Security proof is similar to that of $\mathrm{PRandInv}$ in~\cite{securescm}.

\begin{algorithm}[htbp]
\caption{Random integer with inverse generation over $\mathbb{Z}^{N}$: $([r]^{N}, [r^{-1}]^{N})\leftarrow \mathrm{RndInv}(N)$}
\label{alg:RndInv}
\begin{algorithmic}[1]
\STATE All parties call $[x]^N \leftarrow \mathcal{F}_{RndInt}(\lceil\log N\rceil, N), [y]^N \leftarrow \mathcal{F}_{RndInt}(\lceil\log N\rceil, N)$, calculate $[u]^N\leftarrow [x]^N\cdot [y]^N$ and then output $u;$
\STATE Repeat step $1$ until $u$ is invertible;
\STATE Return $([x]^N,u^{-1}[y]^N);$
\end{algorithmic}
\end{algorithm}

% {\color{yellow}

% Algorithm \ref{alg:PRndInt} describes our $n$-party protocol for random replicated integer generation over $\mathbb{Z}_N$. Specifically, $\mathrm{PRndInt}$ generates without interaction a shared random integer over $\mathbb{Z}_N$ that $r=\sum_{i=1}^vr_i$, for uniformly random $r_{i} \in\left[0 . .2^{k+\kappa}-1\right]$ and $v=\binom{n}{t}$ where {\color{green}can help me explain this algorithm?}

% \begin{algorithm}[htbp]
% \caption{Random replicated integer generation over $\mathbb{Z}^{N}$: $[r]^{N}\leftarrow \mathrm{PRndInt}(k,N)$}
% \label{alg:PRndInt}
% \begin{algorithmic}[1]
% \STATE {\color{green}to do}
% \end{algorithmic}
% \end{algorithm}
% }

Apart from the supporting sub-protocols we have discussed above, in order to define the protocol $\mathrm{GEZ}$ which is the main objective of this subsection, there are other sub-protocols that need to be defined. Such sub-protocols have also been defined in~\cite{securescm} and they can be adopted to our situation with little to no modification. Firstly, we define the functionality $\mathcal{F}_{PreMulC}.$ This functionality receives $\ell$ secretly shared values $[a_1]^N,\cdots, [a_\ell]^N$ as inputs and outputs $\ell$ secretly shared values $([b_1]^N,\cdots, [b_\ell]^N)$ where $b_i=\prod_{j=1}^i a_j.$ Note that such functionality can be realized by the help of $\mathcal{F}_{RndInv}.$ Since the protocol $\mathrm{PreMulC}$ can be defined exactly in the same way as the protocol with the same name in~\cite{securescm}, we omit the definition of $\mathrm{PreMulC}$ and its security definition in the $\mathcal{F}_{RndInv}$-hybrid model.

The second functionality we need is $\mathcal{F}_{Inner}.$ This functionality accepts a pair of vectors of secretly shared values of the same length $([a_1]^N,\cdots, [a_\ell]^N),([b_1]^N,\cdots, [b_\ell]^N)$ and outputs a secretly shared value $[c]^N$ where $c=\sum_{i=1}^\ell a_ib_i.$ Note that this functionality can again be realized by the use of $\ell$ multiplication operations that is done in parallel. Due to the simplicity of the specification of the protocol $\mathrm{Inner}$ realizing $\mathcal{F}_{Inner},$ we again omit its definition as well as its security definition. 

The last functionality we need is $\mathcal{F}_{PRndInt}.$ This functionality accepts two integers $k$ and $N$ as inputs where $N$ is an RSA modulus and $k<\log N.$ Having such inputs, the functionality returns an integer of size $k$ bits and secretly shared over $\mathbb{Z}^N.$ Note that such functionality can be realized by the use of $k$ parallel calls of $\mathcal{F}_{RndBit}$ to obtain the $k$ bits to be used as the binary representation of the output. Hence, it is easy to see that such protocol $\mathrm{PRndInt}$ securely realizes $\mathcal{F}_{PRndInt}$ in the $\mathcal{F}_{RndBit}$-hybrid model.

Now we are ready to construct discuss the protocol $\mathrm{BitLTC}$ which is the main supporting protocol required to define our main objective of this section, $\mathrm{GEZ}.$ The protocol $\mathrm{BitLTC}$ performs a comparison between a public integer $a$ with its binary representation $a=\sum_{i=0}^{k-1} a_i 2^i$ and a bit-wise secretly shared integer $b=\sum_{i=0}^{k-1} b_i 2^i$ defined over $\mathbb{Z}_N$ where $N$ is an RSA modulus as given in Algorithm \ref{alg:BitLTC}. The protocol $\mathrm{BitLTC}$ takes $(a,([b_{k-1}]^N,\cdots, [b_0]^N)$ as input and returns $[s]^N$ where $s=1$ if $a<b$ and $s=0$ otherwise. The protocol $\mathrm{BitLTC}$ is defined in a very similar manner to a protocol with the same name defined in~\cite{securescm}. Hence we will only briefly discuss the intuition behind the protocol while the functionality and security guarantee in the $(\mathcal{F}_{PreMulC}, \mathcal{F}_{Inner},\mathcal{F}_{RndBit},\mathcal{F}_{PRndInt})$-hybrid model directly follow from~\cite{securescm}.

% {\color{yellow}Algorithm \ref{alg:BitLTC} consists of two main steps: (i) $\mathrm{BitLTMap}$ in Algorithm \ref{alg:BitLTMap} to compute a secretly shared integer $[y]$ such that $(a<b)=y_0$, i.e., to map the result of inequality test to bit, and (ii) $\mathrm{LSB}$ in Algorithm \ref{alg:LSB} to extract the least significant bit $[y_0]$ from $[y]$. }

The protocol $\mathrm{BitLTC}$ is mainly divided to two steps. First, the scheme computes a secretly shared integer $[y]^N$ such that its least significant bit is $1$ if $a<b$ and it is $0$ otherwise. This is done by first calculating $y_i=b_i(1-a_i)$ which is $1$ if and only if $a_i<b_i.$ Let $i^\ast$ be the largest index such that $y_i=1.$ Hence for any $i>i^\ast,$ we have $y_i=0.$ Note that $a<b$ if and only if $a_{i^\ast}<b_{i^\ast}$ while the values of $a_j$ and $b_j$ do not matter for any $j<i.$ Such observation is utilized in the algorithm in the definition of $x_i.$ It is easy to see that for any $j<i^\ast,$ we have $x_j$ to be a positive power of $2.$ This shows that $x_j y_j$ is even for any $j\neq i^\ast$ while $x_{i^\ast} y_{i^\ast} = (a_i<b_i).$ Hence it is easy to see that by the calculation of $y,$ its least significant bit is indeed $1$ if and only if $a<b.$ The first half of $\mathrm{BitLTC}$ is summarised in the algorithm $\mathrm{BitLTMap}$ which can be found in Algorithm~\ref{alg:BitLTMap}.

The second half of the algorithm takes $[y]^N$ as input and extracts its least significant bit $[s]^N.$ This is done by first masking $y$ with a random mask of size $k+\kappa$ bits $2^{k-1} + 2r + u$ with $r$ being a $k+\kappa-1$ bits integer and $u\in\{0,1\}$ to get a random value $c.$ By definition the least significant bit of $c$ is $c_0=s\oplus u$ where $s$ is the least significant bit of $y$ that we want to extract. Hence, $s$ can be obtained by calculating $s=c_0\oplus u,$ as has been performed in Algorithm~\ref{alg:LSB}.

\begin{algorithm}[htbp]
\caption{Inequality test between public integer and bit-wise secretly shared integer over $\mathbb{Z}^{N}$: $[s]^{N}\leftarrow \mathrm{BitLTC}(a,([b_{k-1}]^N,\cdots, [b_0]^N)$}
\label{alg:BitLTC}
\begin{algorithmic}[1]
\STATE $[y]^N \leftarrow \mathrm{BitLTMap}(a,([b_{k-1}]^N,\cdots, [b_0]^N);$
\STATE $[s]^N \leftarrow \mathrm{LSB}([y]^N,k);$
\STATE Return $[s]^N;$
\end{algorithmic}
\end{algorithm}

\begin{algorithm}[htbp]
\caption{Bit Mapping of inequality test $[y]^N \leftarrow \mathrm{BitLTMap}(a,([b_{k-1}]^N,\cdots, [b_0]^N)$}
\label{alg:BitLTMap}
\begin{algorithmic}[1]
\FOR{$i=1,\cdots,k-1$}
\STATE $[d_i]^N \leftarrow a_i+[b_i]^N-2a_i[b_i]^N;$
\ENDFOR
\STATE $\left(\left[x_{k-2}\right]^N, \ldots,\left[x_{0}\right]^N\right) \leftarrow \mathcal{F}_{PreMulC}\left(\left[d_{k-1}\right]^N+1, \ldots,\right.$ $\left.\left[d_{1}\right]^N+1\right);$
\FOR{$i=0,\cdots,k-1$}
\STATE $[y_i]^N \leftarrow [b_i]^N(1-a_i);$
\ENDFOR
\STATE $[y]^N \leftarrow\left[y_{k-1}\right]^N+\mathcal{F}_{Inner}\left(\left(\left[x_{0}\right]^N,\ldots,\left[x_{k-2}\right]^N\right),\right.$ $\left.\left(\left[y_{0}\right]^N, \ldots,\left[y_{k-2}\right]^N\right)\right);$
\STATE Return $[y]^N;$
\end{algorithmic}
\end{algorithm}

\begin{algorithm}[htbp]
\caption{Least significant bit extraction $[v]^N \leftarrow \mathrm{LSB}([y]^N,k)$}
\label{alg:LSB}
\begin{algorithmic}[1]
\STATE $[u]^N \leftarrow \mathcal{F}_{RndBit}(N);$
\STATE $[r]^N \leftarrow \mathcal{F}_{PRndInt}(k+\kappa -1);$
\STATE $c \leftarrow \mathrm{Output} (2^{k-1}+[y]^N+2[r]^N+[u]^N);$
\STATE $[v]^N \leftarrow c_0+[u]^N-2c_0[u]^N;$
\STATE Return $[v]^N;$
\end{algorithmic}
\end{algorithm}

Now we are ready to present our $n$-party protocol $\mathrm{GEZ}$ for comparison over $\mathbb{Z}_N$, which is an adapted version of $\mathrm{LTZC}$ protocol in \cite{securescm}. To keep the consistency with the ReLU function used in neural network, we flip the output from $\mathrm{LTZC}.$ Intuitively, $\mathrm{GEZ}$ is done by first masking the $k$-bit private input $x$ with a $k+\kappa$-bit random secret mask $r$ where the binary representation of $r\pmod{2^{k-1}}$ is known. Once $c=x+r$ is published and $c'\equiv c\pmod{2^{k-1}}$ is known, $c'$ and $r'=r\pmod{2^{k-1}}$ are then used as inputs for $\mathcal{F}_{BitLTC}$ to check if $c'<r'.$ This is used to obtain $x'=x\pmod{2^{k-1}}$ which can then be used to extract the most significant bit of $x,$ which contains the information of the sign of $x,$ as required. The full specification of $\mathrm{GEZ}$ can be found in Algorithm~\ref{alg:GEZ}. It is easy to see that the security of $\mathrm{GEZ}$ directly follows from that of $\mathcal{F}_{RndBit},\mathcal{F}_{PRndInt},$ and $\mathcal{F}_{BitLTC}.$ The summary of the functionality and the security claims can be found in Appendices~\ref{subapp:compfunct} and~\ref{subapp:compcorsec} respectively.

% {\color{yellow} Note that the protocol $\mathrm{LTZC}$ is a constant version of protocol $\mathrm{LTZ}$ in \cite{catrina2010secure}. Although $\mathrm{LTZC}$ has higher communication complexity than $\mathrm{LTZ}$, 75\% of {\color{red} May I know where this ``$75\%$'' figure comes from?} it can be precomputed in the offline phase, thus it is suitable for our SPDZ based scheme.} {\color{red} Is this still applicable?} 

\begin{algorithm}[htbp]
\caption{Comparison over $\mathbb{Z}_N$: $[s]^N \leftarrow \mathrm{GEZ}([x]^N,k)$}
\label{alg:GEZ}
\begin{algorithmic}[1]
\STATE For each $i \in [0, ..., k-1]$, all parties call $\left[r_{i}\right]^N = \mathcal{F}_{RndBit}(N)$ in parallel, and thus obtain $\left[r^{\prime}\right]^N = \sum_{i=0}^{k-2} 2^{i} \cdot\left[r_{i}\right]^N;$
\STATE All parties call $\left[r^{\prime \prime}\right]^N = \mathcal{F}_{PRndInt}(\kappa + 1,N);$
\STATE All parties publish $c \leftarrow\mathrm{Output}(2^{k-1} \cdot\left[r^{\prime \prime}\right]^N+\left[r^{\prime}\right]^N + 2^{k-1}+[x]^N)$, and then calculate $c^{\prime} = c \pmod{2^{k-1}};$
\STATE All parties call $[u]^N = \mathcal{F}_{BitLTC}\left(c^{\prime}, \left(\left[\left.\left.r_{k-2}^{\prime}\right]^N, \cdots, \right[r_{0}^{\prime}\right]^N\right)\right);$
\STATE All parties compute $[x^{\prime}]^N = c^{\prime} - \left[r^{\prime}\right]^N + 2^{k-1}[u]^N;$
\STATE All parties compute $[s]^N = 1 + \left([x]^N-\left[x^{\prime}\right]^N\right) (2^{-k+1}\pmod{N});$
\end{algorithmic}
\end{algorithm}

\subsection{Wrap, Modulo Reduction, Share Conversion}
\label{MSW}
In this section, we discuss the secure conversion protocol that will help us in converting the values we generated during the offline phase (modulo $N$ for some RSA modulus $N>n$) to the equivalent value that is compatible with the online phase (modulo $Q$ for a prime $Q$). More specifically, given $[a]^N,$ the additive share of a secret value $a$ modulo $N,$ we want to calculate $[a]^{Q},$ the additive share of the same secret value modulo $Q$ for some prime $Q.$ First, for simplicity, we discuss the transformation of the secret-sharing values. Note that initially, we want our secret value and its shares to be an element in $S_1=\left\{-\frac{n-1}{2},\cdots, \frac{n-1}{2}\right\}.$ For simplicity of our argument in this section, we transform all these values to be non-negative value in $S_2=\{0,\cdots, n-1\}$ via congruence operation. Note that this does not change the correctness of any sharing, and transformation between the two formats can be done trivially.

Suppose that $[x]^N=( x_1^N,\cdots,  x_n^N).$ Then there exists an integer $\delta\in\{0,\cdots, n-2,n-1\}$ such that 
\begin{equation}\label{deltasource}
x= x_1^N+\cdots+x_n^N-\delta N.    
\end{equation}
Hence if we want to consider the equation modulo $Q,$ we will have 
\begin{equation}\label{deltamod}
x \equiv 
\begin{array}{l}
(x_1^N \pmod{Q})+\cdots+(x_n^N \pmod{Q})\\
- (\delta \pmod{Q})\cdot (N\pmod{Q}) 
\end{array}
\pmod{Q}.
\end{equation}

So in order to calculate $[x]^{Q}$ from $[x]^N,$ we need to calculate the value of $\delta$ which can be rewritten as $\delta = \left\lfloor \frac{\sum_{i=1}^n x_i^N}{N}\right\rfloor$.

Now we discuss how we can calculate the value of $\delta.$ Note that Equation~\eqref{deltasource} will not yield the value of $\delta$ if it is computed modulo $N.$ Intuitively, if we consider the equation modulo $N'$ for some $N'$ such that $N'>N^2,$ the relation between the two sides are now equality instead of equivalence modulo $N'$ and hence we can use it to calculate $\delta.$ Once we have the equation modulo $N',$ we can find the maximum value of $j$ such that $\sum_{i=1}^n x_{i}^N-jN\geq 0.$ It is easy to see that $\delta=\sum_{j=0}^{n-1} (\sum_{i=1}^n x_i^N-jN\geq 0 ).$ Now since the equation is modulo $N',$ the calculation will give us $[\delta]^{N'}.$ We let this procedure to be called $[\delta]^{N'}\leftarrow\mathrm{LiftWrap}([x]^N,N')$ which is only applicable if $N'>N^2>n^2.$ Algorithm~\ref{alg:LiftWrap} provides the complete $\mathrm{LiftWrap}$ protocol. The correctness and the security of the protocol $\mathrm{LiftWrap}$ can be found in Appendix~\ref{subapp:MSWcorsec}.
%Note that the security of the protocol is guaranteed due to the security of $\mathrm{GEZ}$ protocol and the fact that no intermediate values is revealed. 

\begin{algorithm}[htbp]
\caption{LiftWrap: $[\delta]^{N'} \leftarrow \mathrm{LiftWrap}_{N}([x]^N, N^{\prime})$}
\label{alg:LiftWrap}
\begin{algorithmic}[1]
\STATE Each party $P_{i}$ computes $\left \langle {x^{\prime}} \right \rangle _i^{N^{\prime}} = \left \langle {x} \right \rangle _i^N\pmod{N^{\prime}};$
\STATE For each $i \in \{1, ..., n-1\}$, all parties call the functionality $\left[\delta_{i}\right]^{N'}= \mathcal{F}_{GEZ} \left( \left[x^{\prime}\right]^{N^{\prime}}-\left[i \cdot N\right]^{N^{\prime}}, l\right)$, where $l= \lceil \log _{2} N^{\prime} \rceil;$
\STATE Return $\left[\delta\right]^{N'} = \sum_{i=1}^{n-1} \left[\delta_{i}\right]^{N'};$
\end{algorithmic}
\end{algorithm}

The next step is to convert $[\delta]^{N'}$ to $[\delta]^N.$ In other words, we need a secure conversion protocol $\mathrm{DropMod}$ to convert a secretly shared value $[x]^{N'}$ back to $[x]^{N}$ where $N'>N^2>n^2.$ In order to complete this, first, we observe that given $[\delta]^{N'}=(\delta_1,\cdots, \delta_n),$ setting $y_i\equiv x_i-\delta_i N \pmod{N'},$ we have $\sum_{i=1}^n y_i = x \pmod{N'}.$ In other words, for any $[x]^N,$ we can calculate $[x]^{N'}.$ Let this procedure be called $[x]^{N'}\leftarrow \mathrm{LiftMod}([x]^N,N')$ which is only applicable if $N'>N^2>n^2.$ Algorithm~\ref{alg:LiftMod} provides the complete $\mathrm{LiftMod}$ protocol. The security is guaranteed based on the security guarantee of $\mathrm{LiftWrap}$ protocol. The summary of the functionality and the security claims can be found in Appendices~\ref{subapp:MSWfunct} and~\ref{subapp:MSWcorsec} respectively. 

\begin{algorithm}[htbp]
\caption{Lift shares in $\mathbb{Z}_N$ to $\mathbb{Z}_{N'}$: $[x]^{N'} \leftarrow \mathrm{LiftMod}_{N}([x]^N, N^{\prime})$}
\label{alg:LiftMod}
\begin{algorithmic}[1]
\STATE Parties jointly call the functionality $[\delta]^{N'}=(\delta'_1,\cdots, \delta'_n)= \mathcal{F}_{LiftWrap}([x]^N,N');$
\STATE For each $i\in\{1,\cdots,n\},$ having $x_i$ and $\delta'_i$ (the shares of $[x]^N$ and $[\delta]^{N'}$ respectively), $P_i$ calculates $x'_i\equiv x_i-\delta'_i N \pmod{N'};$
\STATE Return $[x]^{N'} = (x'_1,\cdots, x'_n);$
\end{algorithmic}
\end{algorithm}

Now we are ready to discuss the last subprotocol needed for the Wrap function, $[x]^N\leftarrow\mathrm{DropMod}([x]^{N^{'}},N).$ Intuitively, The protocol $\mathrm{DropMod}$ is done by using the help of a random value $r$ that is secretly shared twice, $[r]^N$ and $[r]^{N'}$ such that $x+r<N'.$ Having such $r,$ we can compute and reveal $[y]^{N'}=[x+r]^{N'}.$ Having $y,$ we can then calculate $[x]^N=(y\pmod{N})-[r]^N.$ Note that the requirement that $x+r<N'$ is required to avoid having any wrap-arounds in the equation $y\equiv x+r\pmod{N'}$ which is essential in the correctness of the equality $[x]^N=(y\pmod{N})-[r]^N.$ However, because of the absence of any wrap-arounds, the reveal of $y$ may leak some information about $x.$ To avoid such leakage, as has been shown in~\cite{catrina2010improved}, we require $r$ to be at least $2^{\kappa}$ times larger than $x$ for a statistical security with security parameter $\kappa.$ Since we assumed that $x<N,$ this can be done by making sure that the number of bits of $r$ is $\lceil \log N\rceil +\kappa.$ This gives another requirement on the size of $N',$ namely, we require $N'>(2^\kappa N)^2.$ The full protocol $\mathrm{DropMod}$ can be found in Algorithm~\ref{alg:newDropMod}. The functionality and the correctness as well as the security claims can be found in Appendices~\ref{subapp:MSWfunct} and~\ref{subapp:MSWcorsec} respectively.

%Intuitively, assuming the existence of the protocols $\mathrm{RndInt}$ and $\mathrm{LiftMod}$ over $\mathbb{Z}_N,$ we mask $x$ by a random string $r$ such that $r$ is taken from a space that is $2^\kappa$ larger than $N.$ Note that since we want to reveal $x+r$ without any modulo operation (i.e., we find $N'$ to be sufficiently large such that $x+r<N',$ In order to do this, we further set a third RSA modulus $N^{''}$ such that $N^{''}> 2^\kappa N$ while $N^{'}$ is said so that $N^{'}=\Omega((N^{''})^2).$ our aim is to have a random value $r\in \mathbb{Z}_{N^{''}}$ that is secretly shared three times, over $\mathbb{Z}_N, \mathbb{Z}_{N^{''}}$ and $\mathbb{Z}_{N^{'}}.$ Having the secret sharings of $r,$ we use the secret sharing over $\mathbb{Z}_{N^{'}}$ to reveal $y\equiv x+r\pmod{N^{'}}.$ We can then use $z\equiv y\pmod{N}$ and $[r]^N$ to get $[x]^N=z-[r]^N.$ Note that the only value being revealed is a calculation over integer which is masked by another integer that is at least $2^\kappa$ larger than it. So, along with the security guarantees of the other sub-protocols, the security of $\mathrm{DropMod}$ is guaranteed with security parameter $\kappa$ for $x<N.$ Algorithm~\ref{alg:newDropMod} provides a complete $\mathrm{DropMod}$ protocol.

\begin{algorithm}[htbp]
\caption{Convert shares in $\mathbb{Z}_{N^{'}}$ to $\mathbb{Z}_{N}$: $[x]^{N} \leftarrow \mathrm{DropMod}_{N}([x]^{N^{'}}, N)$}
\label{alg:newDropMod}
\begin{algorithmic}[1]

\STATE Parties jointly call $\mathcal{F}_{RndBit}$ to generate a random bit $[b_i]^N=\mathcal{F}_{RndBit}(N)$ for $i=0,\cdots, \lceil \log(N)\rceil+\kappa-1;$
\STATE Parties jointly call $[b_i]^{N^{'}}=\mathcal{F}_{LiftMod}([b_i]^N,N^{'})$ for $i=0,\cdots, \lceil \log(N)\rceil+\kappa-1;$
\STATE Parties locally compute $[r]^{N^{'}}=\sum_{i=0}^{\lceil \log(N)\rceil+\kappa-1}2^i[b_i]^{N^{'}};$
\STATE Parties locally compute $[(r\pmod{N})]^N=\sum_{i=0}^{\lceil \log(N)\rceil+\kappa-1} 2^{i} \pmod{N} [b_i]^{N};$
% \STATE Parties jointly compute $[r']^{N"}=\mathrm{LiftMod}([r']^N,N")$ and $[b_i]^{N"}=\mathrm{LiftMod}([b_i]^N,N")$ for $i=1,\cdots, \kappa;$
% \STATE Parties locally compute $[(r)]^{N"}=[r']^{N"}+\sum_{i=1}^{\kappa} 2^{\lceil \log(N)\rceil+i}[b_i]^{N"};$
\STATE Parties locally compute $[y]^{N^{'}}=[x]^{N^{'}}+[r]^{N^{'}},$ publish their shares of $[y]^{N^{'}}$ and recover $y;$
\STATE Return $[x]^{N} = y \pmod{N}-[r]^N;$
% \ENDIF

\end{algorithmic}
\end{algorithm}

% {\color{blue}we can generate a random value $r\in \mathbb{Z}_N$ that is secretly shared twice, once over $\mathbb{Z}_N$ and another over $\mathbb{Z}_{N'}.$ Having $[r]^N$ and $[r]^{N'},$ we can calculate and reveal the value of $y\equiv x+r \pmod N'.$ Taking $z\equiv y\pmod{N},$ a simple algebraic manipulation gives us $z-[r]^N$ gives us a valid secret sharing of $x$ over $\mathbb{Z}_N.$  Since the only value that is revealed is masked using a fresh secretly shared random value $[r]^{N'}$ where $r<N,$ along with the security guarantees of the other sub-protocols being used, security is guaranteed if $x<N.$  Algorithm~\ref{alg:DropMod} provides a complete $\mathrm{DropMod}$ protocol.}

%\begin{algorithm}[htbp]
%\caption{Convert shares in $\mathbb{Z}_{N'}$ to $\mathbb{Z}_{N}$: $[x]^{N} \leftarrow \mathrm{DropMod}_{N}([x]^{N'}, N)$}
%\label{alg:DropMod}
%\begin{algorithmic}[1]
%\STATE Parties jointly compute $[r]^{N}= \mathrm{RndInt}(\mathbb{Z}_N).$
%\STATE Parties jointly compute $[r]^{N'}=\mathrm{LiftMod}([r]^N,N')$
%\STATE Parties locally compute $[y]^{N'}=[x]^{N'}+[r]^{N'},$ publish their shares of $[y]^{N'}$ and recover $y\in \mathbb{Z}_{N'}$
%\STATE Return $[x]^{N} = y-[r]^N.$
%\end{algorithmic}
%\end{algorithm}

Note that we can apply $\mathrm{DropMod}$ to obtain $[\delta]^N$ from $[\delta]^{N'}.$ We note that in this use, $\mathrm{DropMod}$ is secure since $\delta<n<N.$ So this also guarantees the security of the protocol $[\delta]^N\leftarrow \mathrm{Wrap}_N([x]^N,N').$ Algorithm~\ref{alg:Wrap} provides the complete $\mathrm{Wrap}_N$ protocol. The full specification on the functionality and claims for its correctness and security are briefly discussed in Appendices~\ref{subapp:MSWfunct} and~\ref{subapp:MSWcorsec} respectively.

\begin{algorithm}[htbp]
\caption{Wrap over $\mathbb{Z}_N$: $[\delta]^N \leftarrow \mathrm{Wrap}_{N}([x]^N)$}
\label{alg:Wrap}
\begin{algorithmic}[1]
\STATE Parties agree on a security parameter $\kappa$ and a positive integer $N'$ such that $N'>(2^\kappa N)^2$ where $N'$ is either an RSA modulus or a prime; 
\STATE Parties jointly call $[\delta]^{N'}=\mathcal{F}_{LiftWrap}([x]^N,N');$
\STATE Parties jointly call and return $[\delta]^N= \mathcal{F}_{DropMod}([\delta]^{N'},N);$
\end{algorithmic}
\end{algorithm}

Note that the conversion protocols $\mathrm{LiftMod}$ and $\mathrm{DropMod}$ are only securely and correctly applicable in a very restrictive case. More specifically, $\mathrm{LiftMod}$ can only convert from $\mathbb{Z}_S$ to $\mathbb{Z}_{S'}$ where $S'>S^2>n^2$ and $\mathrm{GEZ}$ must be well defined over $\mathbb{Z}_S.$ In other words, $S$ must be either an RSA modulus or a prime. On the other hand, $\mathrm{DropMod}$ can only convert from $\mathbb{Z}_{S'}$ to $\mathbb{Z}_S$ with $S$ and $S'$ having the same requirements as the ones in $\mathrm{LiftMod}.$ Furthermore, $\mathrm{DropMod}_S([x]^{S'},S)$ is only guaranteed to be correct when $x<S.$

Recall that our main objective of this part is to have a secure conversion protocol to convert a secret sharing $[x]^N$ to $[x]^Q$ where $N$ is an RSA modulus while $Q$ is a prime. Since this needs to be used to convert secret sharing of random values or Beaver Triple generated modulo $N,$ in order to have all possible random values modulo $Q,$ we need to have $N>Q.$ Note that if we use $\mathrm{DropMod}$ for this purpose, the value of $N$ needs to be much bigger than $Q,$ more specifically, $N>Q^2.$ In the following, we propose another conversion protocol $\mathrm{ShaConv}$, which can accomplish this goal securely as long as $N>Q.$

First we recall Equation~\eqref{deltamod}, which is essential in our discussion of $\mathrm{ShaConv}.$ 
\begin{equation*}
x \equiv 
\begin{array}{l}
(x_1^N \pmod{Q})+\cdots+(x_n^N \pmod{Q})\\
- (\delta \pmod{Q})\cdot (N\pmod{Q}) 
\end{array}
\pmod{Q}.
\end{equation*}

Having $[x]^N=(x_1,\cdots, x_n),$ the first $n$ terms of the equation above can be calculated locally by each party $P_i.$ Now in order for the conversion to be completed, we need the last term, $(\delta \pmod{Q})\cdot (N\pmod{Q}).$ Recall that the only information we have about $\delta$ is its secret sharing modulo $N, [\delta]^N$ through the use of $\mathcal{F}_{Wrap}.$ Note that to be able to calculate $[\delta]^Q\cdot (N\pmod{Q}),$ we need to first convert $[\delta]^N$ to $[\delta]^Q$ which can be achieved by using a variant of $\mathrm{DropMod}.$ However, this can only be achieved securely if $\delta<Q.$ In order to guarantee this, in our discussion, we will assume that $n<Q<N<\sqrt{N'}.$ Now suppose that we have $[\delta]^N$, and we would like to calculate $[\delta]^Q.$ Since we do not have the guarantee that $Q<N^2$, we cannot apply $\mathrm{DropMod}$ directly. Instead, we will again use the space $\mathbb{Z}^{N'}$ for this purpose. More specifically, after the calculation of $\mathrm{LiftWrap}$ to obtain $[\delta]^{N'},$ we can directly call $\mathrm{DropMod}$ to obtain $[\delta]^Q$ instead of $[\delta]^{N}.$ Now once $[\delta]^Q$ is obtained, we can obtain $[\delta]^Q\cdot (N\pmod{Q})$ completing the calculation of Equation~\eqref{deltamod}. It is easy to see that since all the sub-protocols being used here are secure, the protocol that calculates Equation~\eqref{deltamod} we have just discussed is secure. The complete protocol of $\mathrm{ShaConv}$ can be found in Algorithm~\ref{alg:ShaConv}.

\begin{algorithm}[htbp]
\caption{Share conversion: $[x]^{Q} \leftarrow \mathrm{ShaConv}([x]^N,Q,(N'))$}
\label{alg:ShaConv}
\begin{algorithmic}[1]
\IF{$N'$ is not included in the input}
\STATE Parties agree on a positive integer $N'$ such that $N'>N^2$ where $N'$ is either an RSA modulus or a prime;
\ENDIF
\STATE Parties jointly call $[\delta]^{N'}=\mathcal{F}_{LiftWrap}([x]^N,N');$
\STATE Parties jointly call $[\delta]^Q=(\delta_1,\cdots, \delta_n)=\mathcal{F}_{DropMod}([\delta]^{N'},Q);$
\FOR{$i=1,\cdots,n$}
\STATE $P_i$ possesses $x_i$ and $\delta_i,$ the $i$-th share of $[x]^N$ and $[\delta]^Q$ respectively;
\STATE $P_i$ locally computes $x'_i\equiv x_i - \delta_i\cdot (N\pmod{Q}) \pmod{Q};$
\ENDFOR
\STATE Return $[x]^Q=(x'_1,\cdots,x'_n);$
\end{algorithmic}
\end{algorithm} 
The functionality and the correctness as well as security claims can be found in Appendices~\ref{subapp:MSWfunct} and~\ref{subapp:MSWcorsec} respectively.

Note that Algorithm~\ref{alg:ShaConv} can be used for any $Q$ and $N$ securely as long as they satisfy the following requirements: (i) $n<Q<N.$, (ii) we have a secure $\mathrm{GEZ}$ protocol modulo $N$, and (iii) we have secure $\mathrm{RndInt}$ and $\mathrm{GEZ}$ protocols modulo $Q$. Due to this observation, we have that $\mathrm{ShaConv}$ is applicable as long as $Q$ and $N$ are either RSA moduli or prime numbers.

\subsection{Beaver Triple Conversion}
\label{TripleConv}

%Share conversion is not trivial in terms of MPC as proved in \cite{benhamouda2018local}, let alone Beaver triple conversion. Inspired by \cite{cramer2005share}, several share conversion protocols have been developed. To convert a sharing over $\mathbb{Z}_{Q}$ to a sharing over $\mathbb{Z}_{Q^{\prime}}$, we can rely on the method of \cite{algesheimer2002efficient} and the mixed-protocol framework given in ABY \cite{demmler2015aby} which generalizes the conversion protocols between different sharing schemes including Arithmetic sharing, Boolean sharing, and Yao’s garbled circuit. However, for additive sharing of integer over $\mathbb{Z}_Q$, the conversion protocol follows the scheme $\mathbb{Z}_Q \rightarrow \mathbb{Z}_{2^t} \rightarrow \mathbb{Z}_{Q^{\prime}}$, which involves bit decomposition and bit sharing conversion over a field to another field \cite{damgaard2008efficient}. These approaches are complicated and expensive. In our work, we do not use these protocols to construct share conversion, instead, we rely on two Paillier crypto-systems which enables us to convert Beaver triples from $\mathbb{Z}_{N}$ to $\mathbb{Z}_{Q}$.

In this section, we focus on the effort of converting values secretly shared over $\mathbb{Z}_N$ for some RSA modulus $N$ to a prime field $\mathbb{Z}_Q.$ We aim to have a method to enable us also to convert Beaver triples. This is especially useful to convert the Beaver triples generated in Algorithm~\ref{alg:TriGen} to a form that can be used in our online phase, which is defined over the prime field $\mathbb{Z}_Q.$ Instead of relying on bit decomposition and bit sharing conversion over a field, our method relies on multiple instances of Paillier cryptosystems.

First we observe that given a triple $([a]^N,[b]^N,[c]^N),$ we have $c\equiv ab\pmod{N}$ or equivalently, $ab=c+\sigma N$ for some integer $\sigma$ such that $|\sigma|\leq n-1.$ Similar to the discussion of $\delta$ in the previous section, in order to get the value of $\sigma,$ we need to lift the equation modulo $N'$ for some $N'>N^2>n^2.$ Using the algorithm $\mathrm{LiftMod}$ described above, we can obtain $([a]^{N'},[b]^{N'},[c]^{N'}).$ Then $[\sigma N]^{N'}=[ab-c]^{N'}.$ Note that it is not secure to use $\mathrm{DropMod}$ to obtain $[\sigma N]^Q$ from $[\sigma N]^{N'}$ even if $N'$ is an RSA modulus or a prime. This is because it is impossible that $\sigma N<Q.$ Hence we will need to use $\mathrm{ShaConv}$ to achieve this. In order to make this possible, we require that the $N'$ we choose to be either an RSA modulus or a prime. Once we have $[\sigma]^Q,$ it is easy to see that $[ab]^Q\equiv [c]^Q + [\delta N]^Q \pmod{Q}.$ This protocol, denoted by $\mathrm{TripConv}$ is secure due to the security of all the sub-protocols involved. The complete protocol of $\mathrm{TripConv}$ can be found in Algorithm~\ref{alg:TripConv}.

\begin{algorithm}[htbp]
\caption{Beaver triple conversion: $([a^{\prime}]^Q, [b^{\prime}]^Q, [c^{\prime}]^Q) \leftarrow \mathrm{TripConv}(([a]^N, [b]^N ,[c]^N),Q)$}
\label{alg:TripConv}
\begin{algorithmic}[1]
\STATE Parties agree on $N'$ such that $N'>N^2$ and $N'$ is either an RSA modulus or a prime;
\STATE Parties jointly call the functionality $([a]^Q,[b]^Q,[c]^Q)=\mathcal{F}_{ShaConv}(([a]^N,[b]^N,[c]^N),Q);$
\STATE Parties jointly call the functionality $([a]^{N'},[b]^{N'},[c]^{N'}) = \mathcal{F}_{LiftMod}([a]^N,[b]^N,[c]^N,N');$
\STATE Parties jointly compute $[\sigma N]^{N'}=[a]^{N'}\cdot[b]^{N'} - [c]^{N'};$
\STATE Parties jointly compute $[\sigma N]^Q=\mathcal{F}_{ShaConv}([\sigma N]^{N'},Q);$
\STATE Set $[c']^Q= [c]^Q+[\delta N]^Q;$
\STATE Return $([a]^Q,[b]^Q,[c']^Q);$
\end{algorithmic}
\end{algorithm}
The functionality and the correctness as well as security claims can be found in Appendices~\ref{subapp:BTCfunct} and~\ref{subapp:BTCcorsec} respectively.

\subsection{Probabilistic Bit Generation}\label{ProbBitGen}
In this section, we discuss the protocol $\mathrm{PrRndBit},$ a protocol with input $p\in(0,1)$ and a prime $Q$ to output $[b]^Q$ where $b=0$ with probability $p$ and $b=1$ with probability $1-p.$ Algorithm \ref{alg:PrRndBit} describes our $n$-party protocol for probabilistic random bit generation over $\mathbb{Z}_Q$ such that $[b]^Q = \mathrm{PrRndBit}(Q,p).$ The generated bit share can be used for computation in dropout layer in neural network. To simplify calculation, in this calculation, instead of considering an element $x\in \mathbb{Z}_Q$ as an integer belonging to $\{-(Q-1)/2,\cdots, (Q-1)/2\},$ we consider it as a non-negative integer belonging to $\{0,\cdots, Q-1\}$ where the transformation is done by adding $Q$ to elements corresponding to negative integers in the former representation. Note that for a uniformly sampled $a \in \mathbb{Z}_Q,$ we have $a - \lfloor p \cdot Q \rfloor < 0$ with probability approximately $p$, and $a - \lfloor p \cdot Q \rfloor \geq 0$ with probability approximately $1-p$. Therefore, $\mathrm{GEZ}_{Q}([a - \lfloor p \cdot Q \rfloor], l) = 0$ with approximate probability $p$ and $\mathrm{GEZ}_{Q}([a - \lfloor p \cdot Q \rfloor], l) = 1$ with approximate probability $1-p$. Note that we can have a more accurate probability by using a larger $Q.$

\begin{algorithm}[htbp]
\caption{Probabilistic bit generation over $\mathbb{Z}_Q$: $[b]^Q \leftarrow \mathrm{PrRndBit}(Q,p)$}
\label{alg:PrRndBit}
\begin{algorithmic}[1]
\STATE All parties deal a random sharing $[a]^Q$, where $a \in \mathbb{Z}_Q;$
\STATE All parties call $[b]^Q = \mathcal{F}_{GEZ}([a]^Q-\lfloor p\cdot Q\rfloor, \ell)$, where $\ell= \lceil \log _{2} Q \rceil;$
\STATE Return $[b]^Q;$
\end{algorithmic}
\end{algorithm}

The functionality and the correctness as well as security claims can be found in Appendices~\ref{subapp:PBGfunct} and~\ref{subapp:PBGcorsec} respectively.

\section{MPC for Neural Network}
\label{NNMPC}
In this section, we describe various protocols to support efficient secure neural network training based on protocols given in Section \ref{SPDZ}, Section \ref{DataRep}, and Section \ref{ProposedProtocols}. Our protocols focus on $n$-party setting where correctness and security are guaranteed by our supporting protocols. Compared with MPC based neural network protocols in SecureML \cite{mohassel2017secureml} and SecureNN \cite{wagh2019securenn}, our protocols are applicable to a larger number of parties. Furthermore, compared to SecureNN~\cite{wagh2019securenn}, our protocols do not require external parties to assist the computation. Note that all the secret shares in this section are over a finite field $\mathbb{Z}_Q$.

\subsection{Linear and Convolutional Layer}
Since operations in linear layer and convolutional layer are exactly multiply-and-accumulates on matrix, all parties can jointly call addition and multiplication protocols in SPDZ to make an efficient evaluation. Note that multiplications on matrix rely on matrix Beaver triples such that $\mathbf{a}_{r \times s} \cdot \mathbf{b}_{s \times t} = \mathbf{c}_{r \times t}$, where $\mathbf{a}_{r \times s}, \mathbf{b}_{s \times t}, \mathbf{c}_{r \times t}$ are matrix. Indeed, matrix Beaver triple generation involves extra multiply-and-accumulates compared with that of single Beaver triple, hence takes more time. However, this can be done during the offline phase, thus greatly improves the efficiency of evaluating multiply-and-accumulates in the online phase.

\subsection{ReLU with Derivative}
In a neural network, the ReLU function is a function that depends on the non-negativity of the input such that

$$\mathrm{ReLU}(x)=
\begin{cases}
x& {x \geq 0} \\
0& {x < 0}
\end{cases}$$

Therefore, evaluating the ReLU function boils down to a comparison between $x$ and $0$.

In addition, it is easy to see that the derivative of $\mathrm{ReLU},$ denoted by $\mathrm{DReLU},$ can be formulated as follows. 

$$\mathrm{DReLU}(x)=
\begin{cases}
1& {x \geq 0} \\
0& {x < 0}
\end{cases}$$

Therefore we can conclude that for any matrix $\mathbf{x}$ of any size, $\mathrm{DReLU(\mathbf{x})}= \mathbf{s}= (\mathbf{x}\geq 0)$ where the comparison is done entry-wise and $\mathrm{ReLU}(\mathbf{x})= \mathbf{x} \times \mathbf{s}$ where $\times$ is an entry-wise matrix multiplication. Following this argument, the parties can then consecutively calculate $\mathrm{ReLU}$ and $\mathrm{DReLU}$ given a secretly shared input $[x]^Q$ following the protocol described in Algorithm~\ref{alg:ReLU}. We note that the security of this protocol is guaranteed by the security of all the sub-protocols being used during its calculation.

\begin{algorithm}[htbp]
\caption{ReLU: $([\mathbf{y}], [\mathbf{s}])\leftarrow \mathrm{ReLU}([\mathbf{x}])$}
\label{alg:ReLU}
\begin{algorithmic}[1]
\STATE All parties call $\mathrm{GEZ}([\mathbf{x}])$ to obtain $[\mathbf{s}] = [\mathbf{x} \geq 0];$
\STATE All parties call $\mathrm{MulTrip}([\mathbf{x}], [\mathbf{s}])$ to obtain $[\mathbf{y}] = [\mathbf{x}] \times [\mathbf{s}]$ where $\times$ represents entry-wise matrix multiplication;
\STATE Return $[\mathbf{y}]$ and $[\mathbf{s}];$
\end{algorithmic}
\end{algorithm}

\subsection{Maxpool with Derivative}

Maxpool is a layer of neural network that outputs the maximum values of various submatrices of the input matrix determined by several parameters, namely, filter size and stride. Since each submatrix can be handled independently in parallel, we focus on finding the maximum value of a submatrix, which can be represented as a list of $s$ elements. To find such maximum value, we use the divide and conquer strategy where the comparison can be made in $\log s$ rounds. To simplify the description of the protocol, we first assume that $s=2^p$ for some positive integer $p.$ In each round, we can pair up the elements and keep the larger element for the next round of comparison. This way, the number of elements to be compared in each round is reduced by half from the previous round. This can be done until we are left with one element, which is the largest element required as the output of $\mathrm{Maxpool}.$

In order to enable backward propagation, we will need the derivative of $\mathrm{Maxpool},$ which we denote by $\mathrm{DMaxpool}.$ Suppose that given an input list $\mathbf{x}=(x_1,\cdots, x_s)$ with $\mathrm{Maxpool}(\mathbf{x})=x^\ast$ where $x^{\ast}$ is the $i^\ast$-th entry of $\mathbf{x}.$ Then $\mathrm{DMaxpool}(\mathbf{x})=\mathbf{v}=(v_1,\cdots, v_s)$ where $v_{i^\ast}=1$ and $v_{j}=0$ for all other $j.$  it is easy to see that the intermediate comparison results from the protocol $\mathrm{Maxpool}$ can be used to provide ``path'' from the maximum value to the $x_i$ which is the maximum value. So multiplying all the intermediate comparison results in the path from the maximum value to any of the values will return $0$ if the value is not the maximum value while it will be $1$ in exactly one of the paths, as required. Our complete protocols for $\mathrm{Maxpool}$ and $\mathrm{DMaxpool}$ can be found in Algorithms~\ref{alg:Maxpool} and~\ref{alg:DMaxpool} respectively. Here $\mathbf{comp}$ is a $s\times \log s$ matrix with its $i$-th row storing all the intermediate comparison results in the path from $x_i$ to the maximum value.

\begin{algorithm}[htbp]
\caption{Maxpool: $([y], [\mathbf{comp}]) \leftarrow \mathrm{Maxpool}([\mathbf{x}]=([x_1],\cdots,[x_s]), s), s=2^p$ for some positive integer $p$}
\label{alg:Maxpool}
\begin{algorithmic}[1]
\FOR{$i=j,\cdots, s$}
\STATE Parties set $[y_{(0,j)}]=[x_i];$
\ENDFOR
\FOR{$i = 1,\cdots,p$}
\FOR{$j = 1,\cdots,\frac{s}{2^{i}}$}
\STATE All parties compute $[c_{i, j}] = \mathrm{GEZ}([y_{(i-1,2j-1)}-y_{(i-1,2j)}]);$
\STATE Parties set $[y_{(i,j)}]=[y_{(i-1,2j)}] + [c_{i,j}]\cdot ([y_{(i-1,2j-1)}-y_{(i-1,2j)}]);$
\FOR{$k=1,\cdots, 2^{i-1}$}
\STATE Parties set $[\mathbf{comp}((j-1)\cdot 2^i + k,i)]=[c_{i,j}] + [0]$ and $[\mathbf{comp}((j-1)\cdot 2^i + 2^{i-1} + k,i)] = [1-c_{i,j}] + [0];$
\ENDFOR
\ENDFOR
\ENDFOR
\STATE Return $([y_{(p,1)}],[\mathbf{comp}]);$
\end{algorithmic}
\end{algorithm}

\begin{algorithm}[htbp]
\caption{DMaxpool: $[\mathbf{ind}] \leftarrow \mathrm{DMaxpool}([\mathbf{x}], [\mathbf{comp}])$}
\label{alg:DMaxpool}
\begin{algorithmic}[1]
\FOR{$i=1,\cdots,s$}
\STATE Parties compute $[ind_i]=\prod_{j=1}^{p} [\mathbf{comp}(i,j)];$
\ENDFOR
\STATE Return $[\mathbf{ind}]=([ind_1],\cdots,[ind_s]);$
\end{algorithmic}
\end{algorithm}

\subsection{Dropout with Derivative}
\label{dropout}

Dropout layer is performed by dropping out some values with some fixed probability $p$ such that
$$\mathrm{Dropout}(x)=
\begin{cases}
0& {\text{probability } p} \\
x/(1-p)& {\text{probability } 1-p}
\end{cases}$$

Algorithm \ref{alg:Dropout} describes our $n$-party protocol for Dropout which outputs the product of input matrix $[\mathbf{x}]$, matrix of probabilistic random bit $[\mathbf{b}]$, and public scaling factor $c=(1-p)^{-1}$ which is encoded to $\bar{c}$ using the fixed point method discussed in Section~\ref{DataRep}. Since in Step 1, the matrix of probabilistic random bits $[\mathbf{b}]$ can be generated in the offline phase of SPDZ, only one multiplication is needed for the evaluation of the Dropout layer. In addition, according to the definition of Dropout and backward propagation, the derivative of Dropout is to propagate the gradients to the nodes except for the nodes that drop their values in the forward propagation. Therefore, $\mathrm{DDropout}$ can be simply obtained from the calculation of corresponding Dropout layer, i.e., $\mathrm{DDropout}(\mathbf{[x]}) = [\mathbf{b}] \times \bar{c}$. Here $[b]$ is the matrix with random bit entries used in the corresponding Dropout layer while $\bar{c}$ is the fixed point encoding of a public scaling factor $c=(1-p)^{-1}.$

\begin{algorithm}[htbp]
\caption{Dropout: $([\mathbf{y}], [\mathbf{b}]) \leftarrow \mathrm{Dropout}([\mathbf{x}], p)$}
\label{alg:Dropout}
\begin{algorithmic}[1]
\STATE All parties call $\mathrm{PrRndBit}(p)$ to generate a matrix which has the same size of $[\mathbf{x}]$ of secret sharing probabilistic random bit $[\mathbf{b}];$
\STATE All parties compute $[\mathbf{y}] = [\mathbf{x}] \times [\mathbf{b}] \times \bar{c}$ where $\bar{c}$ is the fixed point encoding of $c=(1-p)^{-1}$ and the operator $\times$ represents the entry-wise multiplication of involved matrices;
\STATE Return $([\mathbf{y}],[\mathbf{b}]);$
\end{algorithmic}
\end{algorithm}

\section{Communication and Rounds}
\label{ComR}

We summarize the communication and round complexity of our neural network training protocols for 3PC compared with those of SecureNN \cite{wagh2019securenn} in Table \ref{table:ComRound}.

We use the same $\ell$ as the length of bits for data representation for the protocols both in SecureNN and ours. $P$ and $Q$ are the finite field size of SecureNN and our's protocol, respectively. $\mathrm{Linear}_{r, s, t}$ denotes multiplication between two matrix of dimension $r \times s$ with $s \times t$. $\mathrm{Conv2d}_{m, i, f, o}$ denotes the operations in convolutional layer with input matrix of dimension $m \times m$, $i$ input channels, $o$ output channels, and a filter of dimension $f \times f$. $\mathrm{Maxpool}_j$ and $\mathrm{DMaxpool}_j$ denote maxpool with its derivative over $j$ elements. In addition, $\mathrm{Dropout}_j$ and $\mathrm{DDropout}_j$ denote dropout with its derivative over $j$ elements, which are not available in SecureNN.

Note that (i) compared with SecureNN, we encode data in a larger finite field $\mathbb{Z}_Q$ where $Q$ is approximate $\ell+\kappa$ bits, as our design relies on SPDZ, which is more general than the specific design of SecureNN that enables protocols running over a small ring or field,

and (ii) we also do the same operation on MAC, which increases the communication cost, although Beaver triples are generated offline in our protocol thus do not need a party as "assistant", i.e., $P_2$ in \cite{wagh2019securenn}, which saves the communication rounds. We can observe the round improvement of $\mathrm{DMaxpool}$, which is because we use a general constant-round comparison protocol instead of the protocol in \cite{wagh2019securenn} consisting of share conversion, reconstruction, and multiplication. In addition, the round improvement of $\mathrm{DReLU}$ and $\mathrm{DMaxpool}$ comes from the increase of storage of intermediate comparison results, while $\mathrm{Dropout}$ and $\mathrm{DDropout}$ save the rounds by moving some steps to the offline phase of SPDZ.

\begin{table*}[ht]
\centering % used for centering table
\begin{tabular}{c | c c | c c} % centered columns (4 columns)
\hline\hline %inserts double horizontal lines
 & \multicolumn{2}{c}{Rounds} & \multicolumn{2}{|c}{Communication} \\
\hline % inserts single horizontal line
Protocol & SecureNN & Our's & SecureNN & Our's \\ [0.5ex] % inserts table
%heading
\hline % inserts single horizontal line
$\mathrm{Linear}_{r, s, t}$ & 2 & 1 & $2(2rs + 2st + rt)\ell$ & $12(rs + st)\log Q$ \\ % inserting body of the table
$\mathrm{Conv2d}_{m, i, f, o}$ & 2 & 1 & $2(2m^2f^2o + 2f^2oi + m^2o)\ell$ & $12(m^2f^2o + f^2oi)\log Q$ \\
$\mathrm{ReLU}$ & 2 & 3 & $10\ell$ & $12(3\ell+1) \log Q$ \\
$\mathrm{DReLU}$ & 8 & 1 & $8\ell \log P + 22\ell + 4$ & $12 \log Q$ \\
$\mathrm{Maxpool}_j$ & $9j$ & $6 \log j$ & $(8\ell \log P + 42\ell + 4)j$ & $36\ell j \log Q$ \\
$\mathrm{DMaxpool}_j$ & 2 & $\log j$ & $2(j + 1)\ell$ & $12j \log j \log Q$ \\
$\mathrm{Dropout}_j$ & NA & 1 & NA & $12j \log Q$ \\
$\mathrm{DDropout}_j$ & NA & 1 & NA & $12j \log Q$ \\ [1ex] % [1ex] adds vertical space
\hline %inserts single line
\end{tabular}
\caption{Communication and round complexity comparison for 3PC protocols} % title of Table
\label{table:ComRound} % is used to refer this table in the text
\end{table*}

\section{Experiments}
\label{Exp}

In this section, we present our experimental results for secure convolutional neural network training.

\textbf{System setting.} Our prototype is tested over three Linux workstations with an Intel Xeon Silver 4110 CPU (2.10GHz) and 128 GB of RAM, running CentOS7 in both LAN (in the same region) and WAN (simulated using the Linux command-line tool traffic control) settings. In the LAN setting, the average latency is 0.216 ms, and the average bandwidth is 625 MB/s, and in the WAN setting, the average latency and average bandwidth are set to be 80 ms and 100 MB/s, respectively, which were chosen to match the average network condition of three servers in Singapore, Hong Kong and Seoul in Amazon Web Services (AWS). In our experiments, data is represented in 64 bits, including 12 bits (with sign bit) for the integer part and 52 bits for the fractional part. Our protocols are implemented using Gmpy2 \cite{horsen2016gmpy2}, which is a Python version of GMP multiple-precision library and several other standard libraries.
% The ring size of the two Paillier crypto-systems are set to be greater than $2^{1048}$ and $2^{2048}$ respectively and 
The finite field size $Q$ is set to be a prime, which is greater than $2^{145 }$. Lastly, we set the security parameter $\kappa$ to be $80$. Note that to enable the comparison between SecureNN and our protocols, our experiments use the same bit length to represent the data $\ell$ as the one used in the experiment conducted in SecureNN\cite{wagh2019securenn}.

\textbf{Neural network architecture.} We implement two types of neural network: a deep neural network and a convolutional neural network. The former is the same model as used in \cite{wagh2019securenn} and \cite{mohassel2017secureml}, with architecture of fully connected layer (784, 128) - ReLU - fully connected layer (128, 10)- ReLU. The latter has the architecture of padding (32, 32) - convolutional layer (4, 28, 28) - ReLU - Maxpool (4, 14, 14) - convolutional layer (12, 10, 10) - ReLU - dropout - Maxpool (12, 5, 5) - flatten (1, 300) - fully connected layer (300, 120) - fully connected layer (120, 10) - ReLU. Both neural networks are implemented based on reproduced SecureNN protocols in \cite{wagh2019securenn} and our protocols, while plaintext implementation is based on Numpy \cite{oliphant2006guide}. We use MNIST dataset \cite{lecun-mnisthandwrittendigit-2010} which consists of 70,000 black-white hand-written digit images of size $28 \times 28$ in 10 classes. In our experiment, 60,000 images are used for training and 10,000 images are used for testing. Note that we only give the performance evaluation on neural network training, i.e., the online phase of SPDZ, as "raw materials" to be used can be prepared in the offline phase.

\begin{table}[!htbp]
\centering
\begin{tabular}{cccc}
\hline
Type & Epochs & Accuracy & Training time (LAN\slash WAN)\\
\hline
\multirow{3}*{DNN} & 1 & 95.03\% & 0.29\slash 3.99 hr\\
& 5 & 96.99\% & 1.45\slash 19.98 hr\\
& 10 & 97.75\% & 2.92\slash 40.25 hr\\
\hline
\multirow{3}*{CNN} & 1 & 97.00\% & 1.65\slash 8.51 hr\\
& 5 & 97.94\% & 8.27\slash 42.67 hr\\
& 10 & 98.08\% & 16.50\slash 85.28 hr\\
\hline
\end{tabular}
\caption{Secure training with different epochs for 3PC in LAN\slash WAN setting with batch size 64}
\label{BatchTraining}
\end{table}

\begin{table}[!htbp]
\centering
\begin{tabular}{cccc}
\hline
Type & Batch size & Accuracy & Training time (LAN\slash WAN)\\
\hline
\multirow{3}*{DNN} & 16 & 94.99\% & 0.36 \slash 5.72 hr\\
& 64 & 95.03\% & 0.29 \slash 3.99 hr\\
& 128 & 96.75\% & 0.18 \slash 2.26 hr\\
\hline
\multirow{3}*{CNN} & 16 & 96.11\% & 1.94 \slash 11.52 hr\\
& 64 & 97.00\% & 1.65 \slash 8.51 hr\\
& 128 & 97.05\% & 1.42 \slash 6.69 hr\\
\hline
\end{tabular}
\caption{Secure training with different batch size for 3PC in LAN\slash WAN setting for 1 epoch}
\label{EpochTraining}
\end{table}

\begin{table}[htbp]
\centering
\begin{tabular}{ccccc}
\hline
Type & Protocol & Accuracy & Time (LAN\slash WAN) & Comm\\
\hline
\multirow{3}*{DNN} & SecureNN & 94.21\% & 0.13\slash 1.48 hr & 6.08 MB\\
& Our's & 94.03\% & 0.29\slash 3.99 hr & 460.82 MB\\
& Plaintext & 95.92\% & 10.47 s & NA\\
\hline
\multirow{3}*{CNN} & SecureNN & 97.01\% & 0.78\slash 3.54 hr & 56.92 MB\\
& Our's & 97.00\% & 1.65 \slash 8.51 hr & 2.42 GB\\
& Plaintext & 97.07\% & 53.47 s & NA\\
\hline
\end{tabular}
\caption{Training performance comparison with SecureNN for 1 epoch in LAN\slash WAN setting}
\label{TrainingComparison}
\end{table}

As shown in Table \ref{BatchTraining}, with a batch size of 64, our protocol offers a prediction accuracy of 97.75\% after ten epochs for DNN, which takes 2.92 hours in the LAN setting. For CNN, it takes 16.50 hours in LAN setting to complete ten epochs training and achieves an accuracy of 98.08\%. Table \ref{EpochTraining} shows the training time of 1 epoch with different batch sizes for DNN and CNN. Table \ref{TrainingComparison} summarizes the training comparison between our protocol, SecureNN, and plaintext in terms of training time and communication cost. The results show that the ratio of training time increases vastly from DNN to CNN regarding different network settings as our scheme involves higher communication costs. However, we can observe that our protocol improves the threat model from semi-honest to dishonest majority with affordable overheads of around 2.1X and 2.7X in LAN and WAN settings, respectively.

\section{Conclusions}
\label{Conclu}

In this paper, we propose a new scheme with several primitives for secure neural network training in malicious majority setting leveraging on SPDZ. Our experimental results show that our protocols offer active security with affordable overheads of around 2X and 2.5X in LAN and WAN time, respectively, compared with existing schemes in the semi-honest setting. Besides, we propose a scheme for Beaver triple conversion from a ring $\mathbb{Z}_N$ to a finite field $\mathbb{Z}_Q$ to enable MAC checking in SPDZ, relying on two instances of Paillier crypto-systems.

%%
%% The next two lines define the bibliography style to be used, and
%% the bibliography file.
\bibliographystyle{ACM-Reference-Format}
\bibliography{sample-base}

%%
%% If your work has an appendix, this is the place to put it.
\appendix
\section{Functionalities}\label{app:funct}
In this section, we provide the ideal functionalities of the protocols that we propose in Section~\ref{ProposedProtocols}.
\subsection{Comparison Modulo $N$}\label{subapp:compfunct}
The main protocol we are considering here is $\mathcal{F}_{GEZ,N}$ which receives a secretly shared value $[x]^N$ and the bit length $k$ of $x$ as input and outputs $[s]^N$ where $s=(x\geq 0).$ The full specification of the functionality can be found in Functionality~\ref{funct:GEZ}.

\begin{funct}[htbp]
\caption{Comparison Functionality: $([s]^N) \leftarrow \mathcal{F}_{GEZ}([\mathbf{x}]^N, k)$}
\label{funct:GEZ}
\begin{algorithmic}[1]
\STATE Receive the shares of $[\mathbf{x}]^N$ from the parties along with the input $k;$
\STATE Reconstruct $\mathbf{x}$ and calculate $s=(x\geq 0),$ i.e. $s=1$ if $x\geq 0$ and $0$ otherwise;
\STATE Generate a secret sharing of $s, [s]^N;$
\STATE Send $s_i$ to $P_i$ for all $i; $
\end{algorithmic}
\end{funct}

In the construction of a protocol securely realizing $\mathcal{F}_{GEZ,N},$ we require several other supporting functionalities. The first supporting functionality we will consider is $\mathcal{F}_{BitLTC}$ which receives a public $k$-bit integer $x$ and the secretly shared bit decomposition of another $k$-bit value $[y_{k-1}]^N,\cdots, [y_0]^N$ where  $y=\sum_{i=0}^{k-1} y_i 2^i.$ The functionality outputs $[s]^N$ such that $s=(x<y).$ The full specification of the functionality can be found in Functionality~\ref{funct:BitLTC}.  
\begin{funct}[htbp]
\caption{Bit Comparison: $([s]^N) \leftarrow \mathcal{F}_{BitLTC}(x,([y_{k-1}]^N,\cdots,[y_0]^N))$}
\label{funct:BitLTC}
\begin{algorithmic}[1]
\STATE Receive the shares of $[y_i]^N$ from the parties for $i=0,\cdots, k-1;$
\STATE Reconstruct $\mathbf{y}$ and calculate $s=(x<y),$ i.e. $s=1$ if $x<y$ and $0$ otherwise;
\STATE Generate a secret sharing of $s, [s]^N;$
\STATE Send $s_i$ to $P_i$ for all $i; $
\end{algorithmic}
\end{funct}

 In our work, the protocol securely realizing $\mathcal{F}_{BitLTC}$ exactly follows the one presented in~\cite{securescm}. The protocol presented in~\cite{securescm} depends on two other functionalities, $\mathcal{F}_{RndBit},$ which produces a random shared bit, and $\mathcal{F}_{RndInv},$ which produces a random invertible element. For completeness, the full specifications of $\mathcal{F}_{RndBit}$ and $\mathcal{F}_{RndInv}$ can be found in Functionalities~\ref{funct:RndBit} and~\ref{funct:RndInv} respectively.
 
 \begin{funct}[htbp]
\caption{Random Bit Generation: $([a]^N) \leftarrow \mathcal{F}_{RndBit}()$}
\label{funct:RndBit}
\begin{algorithmic}[1]
\STATE Generate a random bit $a;$
\STATE generate a secret sharing of $[a]^N;$
\STATE Send $a_i$ to $P_i$ for all $i;$
\end{algorithmic}
\end{funct}

  \begin{funct}[htbp]
\caption{Random Invertible Element Generation: $([r]^N,[r^{-1}]^N) \leftarrow \mathcal{F}_{RndInv}(N)$}
\label{funct:RndInv}
\begin{algorithmic}[1]
\STATE Generate a random invertible element $r$ and calculate its inverse $r^{-1};$
\STATE Generate a secret sharing of $[r]^N$ and $[r^{-1}]^N;$
\STATE Send $r_i$ and $r_i^{-1}$ to $P_i$ for all $i;$
\end{algorithmic}
\end{funct}
 
 The correctness and security of $\mathrm{BitLTC}$ can be easily verified and can be found in Section~\ref{subapp:compcorsec}.
 
 In contrast to our situation, the protocols securely realizing the two functionalities that are proposed in~\cite{securescm} only works when the underlying ring is a field. Since we want the protocol to be applicable over $\mathbb{Z}_N$ for some RSA modulus, we complete Subsection~\ref{subsec:comp} by proposing new protocols securely realizing both $\mathcal{F}_{RndBit}$ and $\mathcal{F}_{RndInv}$ over $\mathbb{Z}_N.$ 
 
 \subsection{Wrap, Modulo Reduction, and Share Conversion}\label{subapp:MSWfunct}
 First we discuss the first supporting functionality, $\mathcal{F}_{LiftWrap}.$ Intuitively, $\mathcal{F}_{LiftWrap}$ takes a secretly shared $[x]^N$ modulo an RSA modulus $N$ and another RSA modulus $N'$ as inputs and outputs $[\delta]^{N'}$ where $\delta$ is the number of wraps needed in the calculation of $x_1+\cdots+x_n\pmod{N}.$ The full specification of the functionality can be found in Functionality~\ref{funct:LiftWrap}.

 \begin{funct}[htbp]
\caption{Lift Wrap functionality: $([\delta]^{N'}) \leftarrow \mathcal{F}_{LiftWrap}([x]^N, N')$}
\label{funct:LiftWrap}
\begin{algorithmic}[1]
\STATE Receive the shares $x_i$ of $[x]^N$ from all parties and reconstruct $\bar{x}=x_1+x_2+\cdots+x_n;$
\STATE Compute $\delta=\left\lfloor \frac{\bar{x}}{N}\right\rfloor;$
\STATE Generate a secret sharing of $[\delta]^{N'};$
\STATE Send $\delta_i$ to $P_i$ for all $i;$
\end{algorithmic}
\end{funct}

Next we discuss the next supporting functionality, $\mathcal{F}_{LiftMod}$ which takes a secretly shared $[x]^N$ modulo an RSA modulus $N$ and another RSA modulus $N'$ as inputs and outputs $[x]^{N'},$ which is the secret sharing of the same value $x$ modulo $N'.$ The full specification of $\mathcal{F}_{LiftMod}$ can be found in Functionality~\ref{funct:LiftMod}.

 \begin{funct}[htbp]
\caption{Lift Mod functionality: $([x]^{N'}) \leftarrow \mathcal{F}_{LiftMod}([x]^N, N')$}
\label{funct:LiftMod}
\begin{algorithmic}[1]
\STATE Receive the shares $x_i$ of $[x]^N$ from all parties, reconstruct $x\equiv x_1+x_2+\cdots+x_n\pmod{N}$ and treat $x$ as an element of $\mathbb{Z}_{N'};$
\STATE Generate a secret sharing of $[x]^{N'};$
\STATE Send $x_i^{N'}$ to $P_i$ for all $i;$
\end{algorithmic}
\end{funct}

We proceed to the next functionality, $\mathcal{F}_{DropMod}$ which takes a secretly shared $[x]^{N'}$ modulo an RSA modulus $N'$ and another RSA modulus $N$ as inputs given that $x<N$ and outputs $[x]^{N},$ which is the secret sharing of the same value $x$ modulo $N.$ The full specification of $\mathcal{F}_{DropMod}$ can be found in Functionality~\ref{funct:DropMod}.

 \begin{funct}[htbp]
\caption{Drop Mod functionality: $([x]^{N}) \leftarrow \mathcal{F}_{DropMod}([x]^N, N')$}
\label{funct:DropMod}
\begin{algorithmic}[1]
\STATE Receive the shares $x_i$ of $[x]^{N'}$ from all parties, reconstruct $x\equiv x_1+x_2+\cdots+x_n\pmod{N'}$ and treat $x$ as an element of $\mathbb{Z}_{N};$
\STATE Generate a secret sharing of $[x]^{N};$
\STATE Send $x_i^{N}$ to $P_i$ for all $i;$
\end{algorithmic}
\end{funct}

A direct application of $\mathcal{F}_{DropMod}$ is the functionality $\mathcal{F}_{Wrap},$ which has $[x]^N$ and $N'$ as inputs and $[\delta]^N$ as output where $\delta$ is as defined in the discussion of $\mathcal{F}_{LiftWrap}.$ The full specification of $\mathcal{F}_{Wrap}$ can be found in Functionality~\ref{funct:Wrap}.

 \begin{funct}[htbp]
\caption{Wrap functionality: $([\delta]^{N}) \leftarrow \mathcal{F}_{Wrap}([x]^N)$}
\label{funct:Wrap}
\begin{algorithmic}[1]
\STATE Receive the shares $x_i$ of $[x]^N$ from all parties and reconstruct $\bar{x}=x_1+x_2+\cdots+x_n;$
\STATE Compute $\delta=\left\lfloor \frac{\bar{x}}{N}\right\rfloor;$
\STATE Generate a secret sharing of $[\delta]^{N};$
\STATE Send $\delta_i$ to $P_i$ for all $i;$
\end{algorithmic}
\end{funct}

Lastly, we discuss the main functionality we want to achieve in this section, $\mathcal{F}_{ShaConv},$ which converts a secretly shared value $[x]^N$ to $[x]^Q$ for the inputted integer $Q.$ The full specification of $\mathcal{F}_{ShaConv}$ can be found in Functionality~\ref{funct:ShaConv}.

\begin{funct}[htbp]
\caption{Share Conversion functionality: $([x]^{Q}) \leftarrow \mathcal{F}_{ShaConv}([x]^N,Q)$}
\label{funct:ShaConv}
\begin{algorithmic}[1]
\STATE Receive the shares $x_i$ of $[x]^{N}$ from all parties, reconstruct $x\equiv x_1+x_2+\cdots+x_n\pmod{N};$
\STATE Generate a secret sharing of $[x]^{Q};$
\STATE Send $x_i^{Q}$ to $P_i$ for all $i;$
\end{algorithmic}
\end{funct}

We note that the functionalities $\mathcal{F}_{LiftMod},\mathcal{F}_{DropMod},$ and $\mathcal{F}_{ShaConv}$ are exactly the same except for the source and destination modulo. However, to have a clear distinction between the protocols realizing them, we use different functionalities for the different uses. 

\subsection{Beaver Triple Conversion}\label{subapp:BTCfunct}
In this section, we provide the functionality, $\mathcal{F}_{TripConv}.$ Intuitively, $\mathcal{F}_{TripConv}$ takes a secretly shared Beaver Triple $([a]^N,[b]^N,[c]^N)$ over an RSA modulus $N$ as well as a prime $Q.$ It then outputs the a Beaver Triple defined over $Q$ such that the first two elements are $[a]^Q$ and $[b]^Q.$ The full specification of the functionality can be found in Functionality~\ref{funct:BTCfunct}.

\begin{funct}[htbp]
\caption{Beaver Triple Conversion functionality: $([a]^{Q},[b]^Q,[ab\pmod{Q}]^Q) \leftarrow \mathcal{F}_{TripConv}(([a]^N,[b]^N,[ab\pmod{N}]^N),Q)$}
\label{funct:BTCfunct}
\begin{algorithmic}[1]
\STATE Receive the shares $x_i$ of $[x]^{N}$ from all parties, reconstruct $x\equiv x_1+x_2+\cdots+x_n\pmod{N};$
\STATE Generate a secret sharing of $[x]^{Q};$
\STATE Send $x_i^{Q}$ to $P_i$ for all $i;$
\end{algorithmic}
\end{funct}
\subsection{Probabilistic Bit Generation}\label{subapp:PBGfunct}
In this section, we provide the functionality, $\mathcal{F}_{PrRndBit}.$ The functionality $\mathcal{F}_{PrRndBit}$ takes a prime $Q$ and a probability $p\in\{0,1\}$ and outputs a secretly shared $[b]^Q$ where $b=1$ with probability $p^\ast$ and $b=0$ with probability $1-p^\ast$ where $p^\ast=\frac{\lfloor pQ\rfloor}{Q}.$ We note that $p^\ast\approx p$ when $Q$ is sufficiently large. The full specification of the functionality can be found in Functionality~\ref{funct:PrRndBitfunct}.

\begin{funct}[htbp]
\caption{Probabilistic Bit Generation functionality: $([b]^Q) \leftarrow \mathcal{F}_{PrRndBit}(Q,p)$}
\label{funct:PrRndBitfunct}
\begin{algorithmic}[1]
\STATE Sample $b$ from a Bernoulli random variable with success probability $p^\ast;$
\STATE Generate a secret sharing of $[b]^{Q};$
\STATE Send $b_i^{Q}$ to $P_i$ for all $i;$
\end{algorithmic}
\end{funct}

\section{Correctness and Security Claims}\label{app:corsec}
In this section, we provide the correctness and security claims along with their proofs for protocols proposed in Section~\ref{ProposedProtocols}.
\subsection{Comparison Modulo $N$}\label{subapp:compcorsec}
First we provide the security claims for $\mathrm{RndBit}$ and $\mathrm{RndInv}.$
\begin{proposition}\label{prop:RndBit}
The output $[a]^N$ of $\mathrm{RndBit}$ satisfies $a\in\{0,1\}$ except with probability $\frac{2}{N-2}.$ Furthermore, 
The protocol $\mathrm{RndBit}$ correctly and securely realizes $\mathcal{F}_{RndBit}$ against computational adversary controlling all but one parties. 
\end{proposition}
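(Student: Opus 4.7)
The plan is to establish both claims by carefully analyzing the check $\left[a_0^{(\ell)}\right]\cdot \left(1 - \left[a_0^{(\ell)}\right]\right)$ revealed in the final step, and then use the structure uncovered there to construct a simulator for the security argument.

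First, for the probability claim, I would characterize the set of solutions $a \in \mathbb{Z}_N$ to $a(1-a) \equiv 0 \pmod N$. Since $N = pq$ is an RSA modulus with distinct primes, the Chinese Remainder Theorem gives exactly four such solutions: $0$, $1$, and two additional ``bad'' values $b_1, b_2$ obtained from the CRT combinations of $(0 \bmod p, 1 \bmod q)$ and $(1 \bmod p, 0 \bmod q)$. Crucially, knowing either $b_i$ immediately reveals the factorization of $N$ via $\gcd(b_i, N)$, so under the hardness-of-factoring assumption that underpins the distributed Paillier setup, no computationally bounded adversary can produce such $b_i$ with advantage non-negligibly better than blindly guessing one of the $N-2$ non-trivial values, which succeeds with probability $2/(N-2)$.

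Next I would analyze the effect of adversarial behavior on the XOR tree. Let $r \in \{0,1\}$ denote the $\diamond$-combination of the honest parties' contributions, which is uniform whenever at least one honest party is present, and let $M \in \mathbb{Z}_N$ denote the effective aggregated value from all corrupted parties, where $x \diamond y := x + y - 2xy$ is commutative and associative. The final output equals $a = r \diamond M = r(1-2M)+M$, so since $r$ is uniform $a$ equals $M$ or $1-M$, each with probability $1/2$. A short case analysis on $M$ then yields: if $M \in \{0,1\}$ then $a$ is a bit and the check passes; if $M \in \{b_1, b_2\}$ then $a \in \{b_1, b_2\}$ (using $b_2 = 1-b_1$), so the check still passes but $a$ is not a bit; otherwise neither $M$ nor $1-M$ lies in $\{0, 1, b_1, b_2\}$ and the check fails, forcing an abort. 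Combined with the computational bound from the previous step, this gives the claimed failure probability of at most $2/(N-2)$.

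For the simulation claim, I would build a simulator $\mathcal{S}$ that plays the role of the lone honest party: $\mathcal{S}$ extracts the corrupt parties' effective value $M$ from their initial shares, queries $\mathcal{F}_{RndBit}$ for the target bit $a^\ast$, and programs the simulated honest party's initial sharing so that the aggregated $\diamond$-value reconstructs to $a^\ast$ (proceeding to a simulated abort otherwise). Indistinguishability follows because any $n-1$ additive shares of a secret are jointly uniform, so the simulated shares are distributed identically to the real ones, and because the revealed value $\mathrm{check}$ only discloses membership in $\{0,1,b_1,b_2\}$ rather than the identity of $a$, so its distribution in the simulated transcript matches that of the real protocol. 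The main obstacle will be turning the intuitive argument ``the adversary cannot find $b_1$ or $b_2$'' into the stated quantitative bound: one needs a reduction converting any adversary producing the bad output with advantage $\epsilon$ into a factoring algorithm with related advantage, by extracting the offending $M$ from the transcript and computing $\gcd(M,N)$ to recover $p$ or $q$. This is where the computational-adversary hypothesis is genuinely invoked and where careful probability bookkeeping is required.
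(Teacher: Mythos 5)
Your proposal is correct and follows essentially the same route as the paper's proof: both hinge on the observation that $a(1-a)\equiv 0\pmod{N}$ has exactly the four solutions $0,1$ and the two non-trivial idempotent-type values obtainable only by factoring $N$, and both derive uniformity of the output from the presence of at least one honest party contributing a uniform bit. Your decomposition $a=r\diamond M$ with $x\diamond y=x+y-2xy$ and the closure of the solution set under $t\mapsto 1-t$ is a welcome sharpening of the paper's looser phrasing of the same idea, and your explicit simulator sketch and reduction-to-factoring caveat go slightly beyond what the paper itself writes down, but the underlying argument is the same.
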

\begin{proof}
Assuming that $a_i^{(0)}\in\{0,1\}$ for all $i=1,\cdots, n,$ it is easy to see that Lines $2$ up to $6$ of Algorithm~\ref{alg:RndBit} calculates the XOR of the $n$ bits. Hence, we have $a\in\{0,1\}$ as required. Note that since we are considering active adversary, it is possible for $a_i\notin\{0,1\}$ for corrupted $P_i.$ Instead of checking each $a_i$ independently, we are only checking that the final result $a.$ We perform the check by checking whether $a(1-a)\equiv 0\pmod{N}.$ Recall that $N=pq$ for a pair of distinct primes $p$ and $q.$ Note that there are $4$ solutions of the equation $a-a^2\equiv 0\pmod{N}$, namely, $0,1,x,$ and $y\in \mathbb{Z}_N$ where $x$ is the unique solution of the system of equations $x\equiv 0 \pmod{p}$ and $x\equiv 1\pmod{q}$ while $y$ is the unique solution of the system of the equations $y\equiv 1\pmod{p}$ and $y\equiv 0\pmod{q}.$ Note that by the Chinese Remainder Theorem, the problem of finding $x$ and $y$ is equivalent to the problem of factoring $N$ to $p$ and $q.$ Hence, assuming the security of Paillier cryptosystem, we can assume that the adversary does not have the access to the value of $x$ or $y.$ Combined with the fact that $a=\bigoplus_{i=1}^n a_i$ where there exists at least one $a_i\in\{0,1\}$ that is uniformly distributed, unless $a\in\{0,1\},$ we have $a(1-a)=0$ with probability $\frac{2}{n-2}.$ 

Now we consider the security. Note that assuming the security of the multiplication protocol, the only extra value that is revealed is the value $check.$ Since we are assuming that the adversary does not have any access to the non-trivial idempotent elements $x$ and $y,$ unless $a_i\in\{0,1\},$ the protocol aborts with high probability for sufficiently large $N.$ Hence with high probability, $a=\bigoplus_{i=1}^n a_i\in\{0,1\}.$ Since there exists at least one $i$ such that $a_i$ is uniformly distributed, we have $a$ to be uniformly distributed among $\{0,1\},$ the same as the output of $\mathcal{F}_{RndBit}.$ 
\end{proof}

\begin{proposition}\label{prop:RndInv}
The protocol $\mathrm{RndInv}$ correctly and securely realizes $\mathcal{F}_{RndInv}$ against computational adversary controlling all but one parties.
\end{proposition}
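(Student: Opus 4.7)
The plan is to establish correctness and security separately in the $(\mathcal{F}_{RndInt},\mathcal{F}_{Mult})$-hybrid model, where $\mathcal{F}_{Mult}$ denotes the ideal multiplication functionality used to realize $[u]^N\leftarrow[x]^N\cdot[y]^N.$ For correctness, I would first observe that since $N=pq$ with $p,q$ large primes, the set $\mathbb{Z}_N\setminus\mathbb{Z}_N^\ast$ has density $(p+q-1)/N$, which is negligible in the security parameter. Hence the while-loop terminates after a single iteration with overwhelming probability. Conditioned on termination, $u=xy\in\mathbb{Z}_N^\ast$ forces both $x,y\in\mathbb{Z}_N^\ast$. Since $x$ and $y$ were originally independent and uniform in $\mathbb{Z}_N$, conditioning on the event that both are invertible leaves them independent and uniform in $\mathbb{Z}_N^\ast$. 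Consequently, the returned pair $([x]^N,u^{-1}[y]^N)=([x]^N,[x^{-1}]^N)$ is a secret sharing of a uniformly random invertible element and its inverse, which exactly matches the distribution produced by $\mathcal{F}_{RndInv}.$

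For security, I would construct a simulator $\mathcal{S}$ that receives from $\mathcal{F}_{RndInv}$ the corrupted-party shares $\{r_i,(r^{-1})_i\}_{i\in C}.$ The key distributional observation is that in a real execution, given termination, the jointly revealed pair $(x,u)$ is uniformly distributed over $\mathbb{Z}_N^\ast\times\mathbb{Z}_N^\ast$ and independent: $x$ is uniform in $\mathbb{Z}_N^\ast,$ and conditioned on $x$, $u=xy$ ranges uniformly over $\mathbb{Z}_N^\ast$ as $y$ does, since multiplication by a unit is a bijection of $\mathbb{Z}_N^\ast.$ Leveraging this, $\mathcal{S}$ samples $u^\ast\leftarrow\mathbb{Z}_N^\ast$ uniformly, simulates the $\mathcal{F}_{RndInt}$ calls by handing each corrupted $P_i$ the shares $x_i\leftarrow r_i$ and $y_i\leftarrow u^\ast\cdot(r^{-1})_i\bmod N$, and simulates $\mathcal{F}_{Mult}$ by declaring its public output to be $u^\ast.$ The shares of honest parties are determined implicitly by the simulator and are never opened, so no consistency issue arises.

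Indistinguishability then follows by comparing the marginal distributions of everything the environment observes: the revealed value $u$ (respectively $u^\ast$), which in both worlds is uniform over $\mathbb{Z}_N^\ast$; the corrupted parties' $\mathcal{F}_{RndInt}$ outputs, which in both worlds look like uniform shares (in the real world because honest shares mask them; in the simulation because $r$ and $u^\ast r^{-1}$ are uniform in $\mathbb{Z}_N^\ast$); and the final output shares, which in both worlds are consistent with $(r,r^{-1}).$ Security of the underlying multiplication sub-protocol takes care of the intermediate view of $\mathcal{F}_{Mult}.$

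The main obstacle I anticipate is the handling of failed iterations, i.e., rounds in which the revealed $u$ is non-invertible. Such a $u$ exposes a nontrivial factor of $N$ via $\gcd(u,N)$, which would be catastrophic for any downstream use of the Paillier modulus. My plan is to argue that, because at least one honest party contributes uniformly random additive shares to $x$ and $y$, the adversary cannot bias $u$ into the non-invertible set except with probability $O((p+q)/N)$, which is $2^{-\Omega(\kappa)}.$ The simulator therefore only needs to emulate a single successful iteration, and the statistical gap between real and simulated transcripts is absorbed into this negligible term; under the factoring assumption implicit in Paillier, any environment that distinguishes the two worlds by exploiting a failed iteration would, with non-negligible probability, recover a factor of $N.$
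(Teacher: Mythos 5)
Your proof is correct and rests on the same key observation as the paper's own proof: since multiplication by a unit is a bijection of $\mathbb{Z}_N^\ast$, the revealed value $u=xy$ is uniform on $\mathbb{Z}_N^\ast$ and independent of $x$, so publishing it leaks nothing about the output pair. You go further than the paper in two harmless respects---spelling out the simulator explicitly, and bounding the probability of a failed iteration in which a non-invertible $u$ would expose a factor of $N$ via $\gcd(u,N)$, a case the paper's proof silently ignores---but the underlying decomposition and key lemma are the same.
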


\begin{proof}
Note that if $u$ is invertible, then $x$ and $y$ are also invertible. Furthermore, note that for any invertible elements $x,y\in \mathbb{Z}^N,$ there exists an invertible $z\in \mathbb{Z}^N$ such that $x=zy.$ Hence the knowledge of $u$ does not change the distribution of $x.$ This proves that the outputted invertible element $r$ is uniformly distributed among all possible invertible elements in $\mathbb{Z}^n.$

Now we show that $\mathrm{RndInv}$ securely realizes $\mathcal{F}_{RndInv}$ under the adversary assumption claimed. Note that assuming the security of the multiplication protocol between two shared values modulo $N,$ the only extra information the adversary learns from $\mathrm{RndInv}$ compared to $\mathcal{F}_{RndInv}$ is the value of $u.$ However, as observed in the correctness argument, the value of $u$ is independent of the value of $x,$ which shows that such information does not provide information about $[x]^N,$ proving the security of $\mathrm{RndInv}.$
\end{proof}

We note that due to its exact same form as the protocol $\mathrm{BitLTC}$ constructed over fields in~\cite{securescm}, under the $(\mathcal{F}_{RndBit},\mathcal{F}_{RndInv})$ - hybrid model, we also have the following correctness and security result for $\mathrm{BitLTC}$ over $\mathbb{Z}_N.$
\begin{proposition}
The protocol $\mathrm{BitLTC}$ correctly and securely realizes $\mathcal{F}_{BitLTC}$ in integer rings modulo RSA modulus $\mathbb{Z}_N$ in the $(\mathcal{F}_{RndBit},\mathcal{F}_{RndInv})$-hybrid model. 
 \end{proposition}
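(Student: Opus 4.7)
The plan is to split the argument into correctness and security, with the main novelty being verification that every step of the field-based argument of~\cite{securescm} carries over to the ring $\mathbb{Z}_N$ for an RSA modulus $N$. For correctness, I would argue by case analysis on the most significant differing bit of $a$ and $b$. Let $i^\ast$ be the largest index with $a_{i^\ast}\neq b_{i^\ast}$ (when it exists). The integer quantity $d_j=a_j+b_j-2a_jb_j$ realizes XOR, so $d_j=0$ for $j>i^\ast$; hence $\mathcal{F}_{PreMulC}$ returns $x_i=\prod_{j=i+1}^{k-1}(d_j+1)$, which equals $1$ for $i\geq i^\ast$ and is an even integer for $i<i^\ast$ (it contains the factor $d_{i^\ast}+1=2$). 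The values $y_i=b_i(1-a_i)$ equal $1$ exactly when $a_i<b_i$. A direct check of the three cases $a=b$, $a>b$, and $a<b$ shows that the integer $y=y_{k-1}+\sum_{i=0}^{k-2}x_iy_i$ is odd if and only if $a<b$.

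For the subsequent $\mathrm{LSB}$ call, the masked value $c=2^{k-1}+y+2r+u$ has integer least significant bit $y_0\oplus u$, which the final step XORs with $u$ to recover $y_0=(a<b)$. The bound $y<2^k$ follows from the bounds on $x_i$ and $y_i$, so $c<2^{k+\kappa+1}$; provided $N$ exceeds this bound (which holds trivially for a cryptographic RSA modulus), no wrap-around occurs modulo $N$ and the integer reasoning transfers to the reduced value revealed by the parties.

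For security, I would exhibit a simulator in the $(\mathcal{F}_{RndBit},\mathcal{F}_{RndInv})$-hybrid model. Using Propositions~\ref{prop:RndBit} and~\ref{prop:RndInv}, together with the fact that $\mathrm{PreMulC}$, $\mathrm{Inner}$, and $\mathrm{PRndInt}$ are realizable over $\mathbb{Z}_N$ via the constructions of~\cite{securescm} with oracle access to $\mathcal{F}_{RndBit}$ and $\mathcal{F}_{RndInv}$ (plus standard MAC-checked multiplication), the simulation reduces to handling the single public opening in $\mathrm{LSB}$. Given the target sharing $[s]^N$ from $\mathcal{F}_{BitLTC}$, the simulator draws the random bit $u$ via $\mathcal{F}_{RndBit}$, samples $c$ uniformly in the appropriate integer range subject to $c_0=s\oplus u$, and programs the opening accordingly. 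Because $r$ is $\kappa$ bits wider than $y$, the distribution of $c$ in the real execution is within statistical distance $2^{-\kappa}$ of the simulated one, which suffices for security.

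The main obstacle I anticipate is confirming that the parity-based argument, which is transparent over a prime field, remains valid in $\mathbb{Z}_N$. This relies on two facts: $N=pq$ is odd so integer parity is preserved under reduction modulo $N$ whenever no wrap occurs, and the size constraint $c<N$ holds. Both conditions are automatic for the parameter regime of interest, but they are the only assumptions beyond those already required in~\cite{securescm}; once they are in place, the rest of the proof proceeds essentially verbatim.
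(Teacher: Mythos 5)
Your proposal is correct and follows the same route as the paper, which simply asserts that $\mathrm{BitLTC}$ has ``the exact same form'' as the field-based protocol of~\cite{securescm} and inherits its correctness and security in the $(\mathcal{F}_{RndBit},\mathcal{F}_{RndInv})$-hybrid model; you supply the verification the paper omits, correctly identifying the no-wrap-around condition $c<N$ and the parity argument as the only points where the transfer from $\mathbb{Z}_Q$ to $\mathbb{Z}_N$ needs checking. (One cosmetic remark: since no wrap occurs, reduction modulo $N$ is the identity on $c$, so parity preservation does not actually depend on $N$ being odd.)
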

 
Lastly, we consider $\mathrm{GEZ}_N.$ Similar to $\mathrm{BitLTC},$ since the protocol $\mathrm{GEZ}_N$ closely follow the protocol with the same name that is proposed in~\cite{securescm}, under $(\mathcal{F}_{RndBit},\mathcal{F}_{PRndInt},\mathcal{F}_{BitLTC})$-hybrid model, we also have the same correctness and security result for $\mathrm{GEZ}_N$
\begin{proposition}
The protocol $\mathrm{GEZ}_N$ correctly and securely realizes $\mathcal{F}_{GEZ}$ in integer rings modulo RSA modulus $\mathbb{Z}_N$ in the $(\mathcal{F}_{RndBit},$ $\mathcal{F}_{PRndInt},\mathcal{F}_{BitLTC})$-hybrid model. 
 \end{proposition}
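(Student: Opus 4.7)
My plan is to follow the template of the LTZC proof in~\cite{securescm}, verifying at each step that replacing the underlying structure from a prime field to the ring $\mathbb{Z}_N$ with $N$ an RSA modulus preserves both correctness and security. Working in the hybrid model means the calls to $\mathcal{F}_{RndBit}$, $\mathcal{F}_{PRndInt}$, and $\mathcal{F}_{BitLTC}$ can be treated as oracles that deliver the promised shares and outputs, so the analysis reduces to the behavior of the explicit linear combinations and the single broadcast in Step~3.

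For correctness, I will trace Algorithm~\ref{alg:GEZ} on an input $x$ in the $k$-bit signed representation, i.e., $x \in [-2^{k-1}, 2^{k-1})$. Since $r' \in [0, 2^{k-1})$ and $r'' \in [0, 2^{\kappa+1})$, the integer $c = 2^{k-1} r'' + r' + 2^{k-1} + x$ satisfies $0 \leq c < 2^{k+\kappa+1}$, and provided $N$ is chosen at least this large, the reveal in Step~3 recovers $c$ as an integer rather than only as a residue class modulo $N$. Then $c' := c \bmod 2^{k-1} = (x + r') \bmod 2^{k-1}$, and using $u = \mathcal{F}_{BitLTC}(c', (r'_{k-2}, \ldots, r'_0)) = \mathbf{1}_{c' < r'}$, one verifies $x' := c' - r' + 2^{k-1} u = (c' - r') \bmod 2^{k-1} = x \bmod 2^{k-1}$. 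Consequently $x - x'$ equals $0$ when $x \geq 0$ and $-2^{k-1}$ when $x < 0$, so multiplying by $2^{-(k-1)} \bmod N$ (which is well-defined since $N$ is odd) and adding $1$ produces the desired output $s \in \{0,1\}$.

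For security, I will construct a simulator $\mathcal{S}$ that interacts with $\mathcal{F}_{GEZ}$ and reproduces the view of an adversary corrupting up to $n-1$ parties. All ideal-functionality calls are simulated internally by $\mathcal{S}$, which supplies fresh shares on behalf of the honest parties and can force the outputs of $\mathcal{F}_{BitLTC}$ to be consistent with the $c$ it ultimately broadcasts. The only value truly revealed in the clear is $c$ in Step~3, and to simulate it $\mathcal{S}$ samples $\tilde{r}' \in [0, 2^{k-1})$ and $\tilde{r}'' \in [0, 2^{\kappa+1})$ uniformly and broadcasts $\tilde{c} = 2^{k-1}\tilde{r}'' + \tilde{r}' + 2^{k-1}$ (i.e., $c$ with $x$ replaced by $0$). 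Since $r''$ acts as a uniform $(\kappa+1)$-bit mask shifting the support by at most $2^k$ values out of $2^{k+\kappa}$, the real and simulated distributions of $c$ are statistically close with distance $O(2^{-\kappa})$, yielding indistinguishability of the two views.

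The main obstacle is the bookkeeping needed to certify that the arithmetic genuinely lifts from $\mathbb{Z}_Q$ (as originally treated in~\cite{securescm}) to $\mathbb{Z}_N$ without degradation. Specifically I need to check that (i) $N > 2^{k+\kappa+1}$ so that the integer $c$ does not wrap, (ii) $\gcd(2, N) = 1$ so that $2^{-(k-1)} \bmod N$ exists, and (iii) the statistical-hiding bound remains $O(2^{-\kappa})$ in the ring setting rather than the field setting. All three conditions follow from the standing assumption that $N = pq$ is an RSA modulus with $p, q$ odd primes of suitable size; once they are in place, the remainder of the argument is essentially a transcription of the LTZC proof from~\cite{securescm}, with each invocation of an underlying subprotocol replaced by its ideal-functionality counterpart.
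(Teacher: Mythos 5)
Your proposal is correct and follows essentially the same route as the paper, which simply asserts that $\mathrm{GEZ}_N$ inherits correctness and security from the identically-structured $\mathrm{LTZC}$ protocol of~\cite{securescm} once the subprotocols are replaced by their ideal functionalities. You go further by actually executing that plan --- tracing the arithmetic to confirm $x - x' \in \{0, -2^{k-1}\}$, building the simulator for the single revealed value $c$ with the $O(2^{-\kappa})$ statistical bound, and making explicit the conditions $N > 2^{k+\kappa+1}$ and $\gcd(2,N)=1$ that the paper leaves implicit --- all of which is consistent with the paper's parameter choices and strengthens rather than alters its argument.
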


\subsection{Wrap, Modulo Reduction, and Share Conversion}\label{subapp:MSWcorsec} 
 First we provide the correctness and security claims for $\mathrm{LiftWrap}.$
 \begin{proposition}\label{prop:LiftWrap}
 The protocol $\mathrm{LiftWrap}$ correctly and securely realizes the functionality $\mathcal{F}_{LiftWrap}$ in the $\mathcal{F}_{GEZ}$-hybrid model.
 \end{proposition}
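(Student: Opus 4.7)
The plan is to prove correctness and security separately, both following directly from the structure of Algorithm~\ref{alg:LiftWrap} once we establish the key arithmetic identity for $\delta.$

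For correctness, I would first reduce to the no-wraparound case in $\mathbb{Z}_{N'}.$ Writing each share $x_i$ as its canonical non-negative representative in $\{0,\cdots,N-1\},$ Step~1 lifts $x_i$ to $\mathbb{Z}_{N'}$ with the same integer value, so $\bar{x}=\sum_{i=1}^n x_i$ satisfies $0\le \bar{x}\le n(N-1).$ Since $N'>N^2>nN,$ the sum $\bar{x}$ does not wrap in $\mathbb{Z}_{N'},$ so $[x']^{N'}$ represents the integer $\bar{x}$ exactly. The key identity is then
\begin{equation*}
\delta=\left\lfloor\frac{\bar{x}}{N}\right\rfloor=\sum_{i=1}^{n-1}\mathbf{1}[\bar{x}\ge iN]=\sum_{i=1}^{n-1}\mathrm{GEZ}(\bar{x}-iN),
\end{equation*}
which holds because $0\le \delta\le n-1$ and the indicator $\mathbf{1}[\bar{x}\ge iN]$ flips from $1$ to $0$ exactly at $i=\delta+1.$ Each term is computed by a call to $\mathcal{F}_{GEZ}$ on $[x']^{N'}-[iN]^{N'},$ where $iN<nN<N'$ so the subtraction is well-defined and the bit-length parameter $\ell=\lceil\log_2 N'\rceil$ is valid. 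Summing the resulting shares locally in Step~3 yields $[\delta]^{N'},$ matching the output of $\mathcal{F}_{LiftWrap}.$

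For security in the $\mathcal{F}_{GEZ}$-hybrid model, I would give a standard simulation argument. Let $\mathcal{A}$ corrupt up to $n-1$ parties; denote the honest party by $P_h.$ The simulator $\mathcal{S}$ interacts with $\mathcal{F}_{LiftWrap}$ and receives the corrupted parties' shares $\{\delta_j\}_{j\ne h}$ of the ideal output $[\delta]^{N'}.$ In the real protocol, the only messages $\mathcal{A}$ sees beyond local computation are (i) the corrupted parties' output shares from the $n-1$ calls to $\mathcal{F}_{GEZ},$ and (ii) nothing from Step~1 or Step~3, which are purely local. Since $\mathcal{F}_{GEZ}$ hands each party a uniformly random additive share of its output subject to the correct reconstruction, $\mathcal{S}$ can simulate each of the $n-1$ calls by sampling uniformly random values $\{\delta_{i,j}^{\mathrm{sim}}\}_{j\ne h}$ in $\mathbb{Z}_{N'}$ for $i=1,\cdots,n-2,$ and then for $i=n-1$ choosing $\delta_{n-1,j}^{\mathrm{sim}}=\delta_j-\sum_{i=1}^{n-2}\delta_{i,j}^{\mathrm{sim}}\pmod{N'}$ so that the per-party sum reproduces the ideal share $\delta_j$ that $\mathcal{A}$ would later compute in Step~3. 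The joint distribution of simulated shares is identical to the real distribution because in both cases, conditioned on the aggregate $[\delta]^{N'},$ the individual shares $\{\delta_i\}_{i=1}^{n-1}$ form a uniformly random additive partition of $\delta$ in $\mathbb{Z}_{N'}.$

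The main obstacle, and the place where I would spend the most care, is justifying the no-wraparound claim used throughout correctness, namely that $\bar{x}<N',$ and that the inputs $[x']^{N'}-[iN]^{N'}$ to $\mathcal{F}_{GEZ}$ genuinely represent non-negative integers of size at most $\ell$ bits when $\bar{x}\ge iN$ and otherwise a signed value in the same bit-length under the convention of Section~\ref{DataRep}. This requires the standing assumptions $N'>N^2$ and $n<N$ (so that $n(N-1)<N^2<N'$), and also that $\mathrm{GEZ}$ over $\mathbb{Z}_{N'}$ correctly interprets the signed representation. All of these are either given in the statement of the algorithm or already discussed at the end of Section~\ref{MSW}, so once invoked, correctness and security both reduce cleanly to the guarantees of $\mathcal{F}_{GEZ}.$
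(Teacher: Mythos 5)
Your proposal is correct and follows essentially the same route as the paper: the correctness argument is exactly the indicator-sum identity $\delta=\sum_{i=1}^{n-1}\mathbf{1}[\bar{x}\ge iN]$ sketched in Section~\ref{MSW} (with the no-wraparound bound $\bar{x}\le n(N-1)<N^2<N'$ made explicit), and the security argument formalizes, via a standard share-resampling simulator, the paper's observation that nothing beyond the $\mathcal{F}_{GEZ}$ outputs is revealed. You supply more detail than the paper's two-sentence proof, but there is no difference in approach.
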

 \begin{proof}
 The correctness of $\mathrm{LiftWrap}$ follows from the discussion in Section~\ref{MSW}. Under the $\mathcal{F}_{GEZ}$-hybrid model, it is easy to see that $\mathrm{LiftWrap}$ reveals no other information in any steps, proving the security claim of $\mathrm{LiftWrap}.$
 \end{proof}
Next we provide the correctness and security claims for $\mathrm{LiftMod}.$ 
 \begin{proposition}\label{prop:LiftMod}
 The protocol $\mathrm{LiftMod}$ correctly and securely realizes the functionality $\mathcal{F}_{LiftMod}$ in the $\mathcal{F}_{LiftWrap}$-hybrid model.
 \end{proposition}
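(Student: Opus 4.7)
The plan is to separate the argument into correctness (which reduces to a one-line algebraic identity once the wrap count $\delta$ is delivered by $\mathcal{F}_{LiftWrap}$) and security (a straightforward simulation, since the only interaction in $\mathrm{LiftMod}$ is the single oracle call).

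For correctness, I would invoke the discussion preceding Algorithm~\ref{alg:LiftMod}: writing each share $x_i\in\{0,\dots,N-1\}$ as a non-negative integer, the integer sum satisfies $\sum_{i=1}^n x_i = x + \delta N$ for some $\delta\in\{0,\dots,n-1\}$, and Proposition~\ref{prop:LiftWrap} guarantees that $\mathcal{F}_{LiftWrap}$ returns $(\delta'_1,\dots,\delta'_n)$ with $\sum_i \delta'_i\equiv \delta\pmod{N'}$. Summing the local assignments $x'_i\equiv x_i-\delta'_i N\pmod{N'}$ then yields $\sum_i x'_i\equiv (x+\delta N) - \delta N\equiv x\pmod{N'}$, which is exactly the sharing required by $\mathcal{F}_{LiftMod}$. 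The identity $\sum x_i - \delta N = x$ is applied as an integer equation before reduction, which uses the size bound $N'>N^2>nN$ to avoid a spurious wrap.

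For security, given an adversary $\mathcal{A}$ corrupting a set $C\subsetneq\{1,\dots,n\}$, I would build a simulator $\mathcal{S}$ that forwards the corrupt inputs $\{x_i\}_{i\in C}$ to $\mathcal{F}_{LiftMod}$, receives back the corresponding corrupt shares $\{x'_i\}_{i\in C}$, and simulates the hybrid oracle $\mathcal{F}_{LiftWrap}$ by returning $\delta'_i := N^{-1}(x_i-x'_i)\pmod{N'}$ for each $i\in C$. Because both $N$ and $N'$ are RSA moduli or primes generated independently with $N<N'$, we have $\gcd(N,N')=1$ except with negligible probability, so $N^{-1}\pmod{N'}$ is well-defined. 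In a real execution the honest party's share of $[\delta]^{N'}$ is uniform, hence the corrupt shares $\{\delta'_i\}_{i\in C}$ are marginally uniform on $\mathbb{Z}_{N'}^{|C|}$; in the ideal execution the same uniformity holds for $\{x'_i\}_{i\in C}$, and the affine bijection $x'_i\leftrightarrow N^{-1}(x_i-x'_i)$ transports that uniformity between the two views, so they are identically distributed.

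The main obstacle is the bookkeeping around the non-modular identity $\sum x_i = x + \delta N$: it has to be tracked as an integer equation before being reduced in $\mathbb{Z}_{N'}$, which is the crucial role of the side condition $N'>N^2>n^2$ from Algorithm~\ref{alg:LiftMod}. Once that is pinned down, everything else is routine, since no new communication is introduced beyond the oracle call to $\mathcal{F}_{LiftWrap}$, and the simulator's derivation of $\delta'_i$ from $x'_i$ matches the joint distribution of the real protocol exactly.
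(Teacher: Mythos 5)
Your proof is correct and follows the same route as the paper's: correctness from the integer identity $\sum_i x_i = x+\delta N$ (valid before reduction because $nN<N^2<N'$), and security from the observation that the only values a corrupted party sees beyond its inputs are fresh shares delivered by $\mathcal{F}_{LiftWrap}$. The paper's own proof simply defers to the discussion in Section~\ref{MSW} and asserts that no additional information is revealed, so your explicit simulator (and the remark that $\gcd(N,N')=1$ is needed for it to invert the local affine map) is a more detailed rendering of the same argument rather than a different one.
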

 \begin{proof}
 The correctness of $\mathrm{LiftMod}$ follows from the discussion in Section~\ref{MSW}. Similarly, as discussed in Section~\ref{MSW}, under the assumption given in $\mathcal{F}_{LiftWrap}$-hybrid model, $\mathrm{LiftMod}$ reveals no other information in any steps, proving the security claim of $\mathrm{LiftMod}.$
 \end{proof} 
 We move on to $\mathrm{DropMod}.$
\begin{proposition}\label{prop:DropMod}
 The protocol $\mathrm{DropMod}$ correctly and securely realizes the functionality $\mathcal{F}_{DropMod}$ in the $(\mathcal{F}_{RndBit},\mathcal{F}_{LiftMod})$-hybrid model with the given security parameter $\kappa.$
 \end{proposition}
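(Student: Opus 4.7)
The plan is to establish correctness and security in two separate steps, leveraging the fact that we are working in the $(\mathcal{F}_{RndBit},\mathcal{F}_{LiftMod})$-hybrid model so that Steps 1 and 2 of Algorithm~\ref{alg:newDropMod} can be treated as producing ideal sharings $[b_i]^N$ and $[b_i]^{N'}$ of the same bit $b_i\in\{0,1\}.$

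For correctness, I would first argue that $r:=\sum_{i=0}^{\lceil \log N\rceil+\kappa-1} 2^i b_i$ is a non-negative integer strictly less than $2^{\lceil \log N\rceil+\kappa}.$ Combined with the hypothesis that $x<N<2^{\lceil \log N\rceil},$ this gives the integer bound $x+r<N+2^{\lceil \log N\rceil+\kappa}<N'$ using the size requirement $N'>(2^\kappa N)^2$ on the lifted modulus. Consequently, the sum computed modulo $N'$ in Step 5 coincides with the integer $x+r,$ so the revealed value satisfies $y=x+r$ over $\mathbb{Z}.$ Reducing both sides modulo $N$ gives $y\pmod N\equiv x+(r\pmod N)\pmod N,$ and since Step 4 produces a sharing $[r\pmod N]^N$ of precisely $r\pmod N$ (here I rely on the homomorphic property of additive sharing together with the fact that each $b_i$ is already in $\{0,1\}$), the output of Step 6 is indeed a valid sharing of $x$ modulo $N.$

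For security, I would build a simulator $\mathcal{S}$ that interacts with $\mathcal{F}_{DropMod}$ and provides a view to the adversary $\mathcal{A}$ controlling at most $n-1$ parties. In the hybrid model, $\mathcal{S}$ handles the calls to $\mathcal{F}_{RndBit}$ and $\mathcal{F}_{LiftMod}$ by honestly sampling the corrupted parties' shares of uniformly random bits. The only non-trivial public interaction is the reveal of $y$ in Step 5. Here $\mathcal{S}$ does not know $x,$ so it samples $r^\ast$ uniformly from $\{0,\cdots, 2^{\lceil \log N\rceil+\kappa}-1\}$ and sets $y^\ast$ to mimic $x+r^\ast$ by instead sampling $y^\ast$ uniformly from the same interval; it then programs the honest party's broadcast share in Step 5 so that the revealed sum equals $y^\ast.$ The standard statistical-masking argument (as used in~\cite{catrina2010improved}) shows that for any fixed $x<N,$ the distribution of $x+r$ where $r$ is uniform on $\{0,\cdots, 2^{\lceil \log N\rceil+\kappa}-1\}$ has statistical distance at most $2^{-\kappa}$ from the uniform distribution on that interval. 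Hence the real and simulated views of $\mathcal{A}$ are statistically indistinguishable up to a $2^{-\kappa}$ error, as required by the security parameter.

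The main obstacle I anticipate is the last step of the security argument, namely ensuring that the simulator can faithfully reproduce the joint distribution of the adversary's shares of $[b_i]^N,[b_i]^{N'},[r]^{N'},[r\pmod N]^N$ together with the revealed $y.$ Although each individual sharing is easy to simulate in the hybrid model (the adversary's shares are uniformly random subject to summing to the secret), one must check that the reveal of $y$ is consistent with the joint distribution of the $[b_i]^{N'}$ shares of the honest party; the statistical masking argument shows that programming the honest party's share of $[r]^{N'}$ to match any prescribed $y^\ast$ induces a distribution that is negligibly far from the real one precisely because $r$ is $2^\kappa$ times larger than the maximum possible value of $x.$ Once this is verified, the claim follows.
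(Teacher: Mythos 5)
Your proposal is correct and follows essentially the same route as the paper's proof: correctness via the integer bound $x+r<N'$ (so the reveal of $y$ equals the true integer sum) followed by reduction modulo $N$ against the double sharing of $r$, and security via the statistical-masking argument with parameter $\kappa$ for the single revealed value $y$. The only difference is presentational: you write out the simulator and the $2^{-\kappa}$ statistical-distance bound explicitly, whereas the paper delegates that step to the cited theorem of Catrina--de Hoogh, so no gap remains.
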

 \begin{proof}
First, we prove the correctness of $\mathrm{DropMod}.$ Let $x<N$ and $x_1,\cdots, x_n\in \mathbb{Z}^{N'}$ such that $x_1+\cdots+x_n\equiv x\pmod{N'}.$ It is easy to see that by Propositions~\ref{prop:RndBit} and~\ref{prop:LiftMod}, the first four steps yield a random value $r\in \mathbb{F}_2^{\lceil \log(N)\rceil+\kappa}$ that is secretly shared in two different ways, $[r]^N=(r_1^N,\cdots, r_n^N)$ and $[r]^{N'}=(r_1^{N'},\cdots, r_n^{N'}),$ i.e. $r_1^N+\cdots r_n^N\equiv r \pmod{N}$ and $r_1^{N'}+\cdots+r_n^{N'}\equiv r\pmod{N'}.$ By the choice of value of $N',$ we have $y=x+r.$ Let $y=y_p+\delta_1 N$ where $y_p= y\pmod{N}$ and $r=r_1^N+\cdots+r_n^N+\delta_2 N$ for some integers $\delta_1$ and $\delta_2.$ Then $y_p-(r_1^N+\cdots+r_n^N) = (x+r-\delta_1 N) - (r-\delta_2 N) = x+(\delta_2-\delta_1)N\equiv x \pmod{N}$ as required.

Note that apart from the reveal of the value of $y,$ under the $(\mathcal{F}_{RndBit},\mathcal{F}_{LiftMod})$-hybrid model, $\mathrm{DropMod}$ does not reveal any other values. Hence, by~\cite[Annex A, Theorem $1$]{catrina2010improved}, $\mathrm{DropMod}$ securely realizes $\mathcal{F}_{DropMod}$ with statistical security having security parameter $\kappa.$ 

 \end{proof} 
 
We proceed to the protocol $\mathrm{Wrap}.$

\begin{proposition}\label{prop:Wrap}
 The protocol $\mathrm{Wrap}$ correctly and securely realizes the functionality $\mathcal{F}_{Wrap}$ in the $(\mathcal{F}_{LiftWrap},\mathcal{F}_{DropMod})$-hybrid model with security parameter $\kappa.$
 \end{proposition}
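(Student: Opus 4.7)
The plan is to prove correctness and security separately, exploiting the ideal functionality assumptions of the $(\mathcal{F}_{LiftWrap}, \mathcal{F}_{DropMod})$-hybrid model so that the proof reduces to a routine check of preconditions and a simulator construction that inherits its statistical security from $\mathcal{F}_{DropMod}$.

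For correctness, I would trace through the two calls in Algorithm~\ref{alg:Wrap}. Functionality~\ref{funct:LiftWrap} guarantees that after the first call the parties hold $[\delta]^{N'}$ with $\delta = \lfloor \bar{x}/N \rfloor$ where $\bar{x} = x_1 + \cdots + x_n$ is the integer sum of the input shares; this is exactly the value that $\mathcal{F}_{Wrap}$ is defined to produce. The second call invokes $\mathcal{F}_{DropMod}([\delta]^{N'}, N)$, whose contract (Functionality~\ref{funct:DropMod}) requires that the represented value satisfies $\delta < N$ so as to produce a consistent re-sharing modulo $N$. Since each share $x_i$ lies in $\{0, \ldots, N-1\}$ and there are $n$ of them, $\bar{x} < nN$, hence $\delta \le n-1$; combined with $n < N$ (trivial for any RSA modulus $N$ of cryptographic size), the precondition is satisfied, and the output $[\delta]^N$ is a secret sharing of the desired integer $\delta$, matching Functionality~\ref{funct:Wrap}.

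For security, I would build a simulator $\mathcal{S}$ that interacts with $\mathcal{F}_{Wrap}$ and replicates the adversary's view in the hybrid execution. Let $I \subset \{1, \ldots, n\}$ be the set of corrupted indices, with at least one honest party outside $I$. Upon receiving $\{x_i\}_{i \in I}$ from $\mathcal{A}$, the simulator forwards them to $\mathcal{F}_{Wrap}$ to obtain the corrupted parties' output shares $\{\delta_i\}_{i \in I}$ of $[\delta]^N$. To simulate the $\mathcal{F}_{LiftWrap}$ call, $\mathcal{S}$ samples uniformly random elements in $\mathbb{Z}_{N'}$ as the corrupted parties' shares of $[\delta]^{N'}$; since at least one honest share is missing, this is distributed identically to a genuine fresh secret sharing of any value in $\mathbb{Z}_{N'}$. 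To simulate the $\mathcal{F}_{DropMod}$ call, $\mathcal{S}$ simply delivers the $\{\delta_i\}_{i \in I}$ already obtained from $\mathcal{F}_{Wrap}$. Since these are the only interactions that $\mathcal{A}$ sees in the hybrid world, the two views coincide perfectly.

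The main obstacle, insofar as there is one, is the bookkeeping around the ``security parameter $\kappa$'' clause: in the pure hybrid model the composition is in fact perfectly secure, and the $\kappa$ only becomes relevant when $\mathcal{F}_{DropMod}$ is later unwound via Proposition~\ref{prop:DropMod}, at which point the $2^{-\kappa}$ statistical slack of $\mathrm{DropMod}$ propagates to $\mathrm{Wrap}$. I would therefore justify the parameter choice $N' > (2^\kappa N)^2$ in Algorithm~\ref{alg:Wrap} by observing that although $\mathcal{F}_{LiftWrap}$ alone would only require $N' > N^2$, the larger bound is exactly what $\mathrm{DropMod}$ needs for its $\kappa$-bit statistical masking when later instantiated, so the same $N'$ is reusable across both calls and yields the stated statistical security parameter $\kappa$ for $\mathrm{Wrap}$ in the end-to-end composition.
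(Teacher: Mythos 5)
Your proposal is correct and follows essentially the same route as the paper, which simply asserts that correctness is by definition and that security is immediate because the protocol consists of nothing beyond the two functionality calls and a data-independent parameter agreement. You fill in the details the paper leaves implicit — the precondition check $\delta \le n-1 < N$ for $\mathcal{F}_{DropMod}$ (which the paper notes separately in Section~\ref{MSW}), the explicit simulator, and the observation that $\kappa$ only becomes operative once $\mathcal{F}_{DropMod}$ is instantiated — but these are elaborations of the same argument rather than a different one.
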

 \begin{proof}
 The correctness of $\mathrm{Wrap}$ directly follows definition and security is obvious since aside from the determination of the values of $\kappa$ and $N'$ which is done independent of the private values, there are no other operations in addition to the calls of the two functionalities.
 \end{proof}
 Lastly, we discuss the protocol $\mathrm{ShaConv}.$
 \begin{proposition}\label{prop:ShaConv}
 The protocol $\mathrm{ShaConv}$ correctly and securely realizes the functionality $\mathcal{F}_{ShaConv}$ in the $(\mathcal{F}_{LiftWrap},\mathcal{F}_{DropMod})$-hybrid model with security parameter $\kappa.$
 \end{proposition}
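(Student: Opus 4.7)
The plan is to establish correctness and security separately, leveraging the guarantees already provided by $\mathcal{F}_{LiftWrap}$ and $\mathcal{F}_{DropMod}$ in the hybrid model, and reducing everything else to the algebraic identity in Equation~\eqref{deltamod}.

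For correctness, I would first invoke $\mathcal{F}_{LiftWrap}$ on input $[x]^N$ to obtain $[\delta]^{N'}$ where $\delta=\lfloor(\sum_{i=1}^n x_i)/N\rfloor,$ noting that by construction $0\leq \delta\leq n-1.$ Since we assumed $n<Q,$ in particular $\delta<Q,$ so the subsequent call to $\mathcal{F}_{DropMod}$ with source modulus $N'$ and target modulus $Q$ is applied inside its valid range and yields $[\delta]^Q=(\delta_1,\ldots,\delta_n)$ with $\sum_{i=1}^n\delta_i\equiv\delta\pmod{Q}.$ Finally, by Equation~\eqref{deltamod}, $x\equiv \sum_{i=1}^n(x_i\bmod Q)-(\delta\bmod Q)\cdot (N\bmod Q)\pmod{Q},$ so summing the locally computed $x'_i\equiv x_i-\delta_i\cdot(N\bmod Q)\pmod{Q}$ indeed yields $x\pmod{Q},$ as required.

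For security, I would construct a simulator $\mathcal{S}$ interacting with $\mathcal{F}_{ShaConv}$ that perfectly simulates the view of any computational adversary corrupting up to $n-1$ parties. In the $(\mathcal{F}_{LiftWrap},\mathcal{F}_{DropMod})$-hybrid model the only messages received by corrupted parties outside the local computation in Step $7$ are the shares output by these two ideal functionalities; by definition these shares, on the honest party side, are uniformly random elements of $\mathbb{Z}_{N'}$ and $\mathbb{Z}_Q$ respectively, constrained only to be consistent with the corrupted parties' own shares and the secret. Upon receiving from $\mathcal{F}_{ShaConv}$ the corrupted parties' shares $\{x_i^Q\}_{i\in C},$ the simulator samples fresh uniform shares for $[\delta]^{N'}$ and $[\delta]^Q$ on the honest parties' behalf consistent with the adversary's inputs, hands these to the adversary as the outputs of the simulated $\mathcal{F}_{LiftWrap}$ and $\mathcal{F}_{DropMod}$ calls, and finally lets the adversary perform the purely local Step $7$ itself. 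Since no other values are exchanged and the local computation depends only on values the adversary already holds, the simulated view is identically distributed to the real view.

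The main obstacle I anticipate is not in the security reduction itself, which is essentially syntactic in the hybrid model, but in carefully verifying that every pre-condition of the invoked functionalities is met in context: in particular $n<Q<N<\sqrt{N'}$ as stipulated earlier in Section~\ref{MSW}, so that $\mathcal{F}_{DropMod}([\delta]^{N'},Q)$ is applied with $\delta<Q$ (otherwise the correctness of $\mathrm{DropMod}$ breaks and an information leak about $\delta$ could occur through the reveal step inside $\mathrm{DropMod}$), and that $N'$ is chosen to be an RSA modulus or a prime so that the underlying $\mathrm{GEZ}$ inside $\mathcal{F}_{LiftWrap}$ is well defined. Once these parameter constraints are confirmed, the rest of the argument reduces cleanly to Propositions~\ref{prop:LiftWrap} and~\ref{prop:DropMod} together with the correctness identity of Equation~\eqref{deltamod}.
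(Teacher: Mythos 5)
Your proposal is correct and follows essentially the same route as the paper: correctness is reduced to the identity in Equation~\eqref{deltamod} together with the observation that $\delta\leq n-1<Q$ makes the $\mathcal{F}_{DropMod}$ call valid, and security follows because in the $(\mathcal{F}_{LiftWrap},\mathcal{F}_{DropMod})$-hybrid model the protocol consists only of the two functionality calls plus local computation. You simply spell out the simulator and the parameter checks that the paper leaves implicit.
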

 \begin{proof}
 The correctness directly follows from the fact that we are calculating Equation~\eqref{deltamod}. Security is guaranteed since no other operations has been done except for the call of the two functionalities and local computations.
 \end{proof}
 \subsection{Beaver Triple Conversion}\label{subapp:BTCcorsec}
 In this section we provide the correctness and security claims for $\mathrm{TripConv}.$
\begin{proposition}\label{prop:TripConv}
 The protocol $\mathrm{TripConv}$ correctly and securely realizes the functionality $\mathcal{F}_{TripConv}$ in the $(\mathcal{F}_{LiftMod},\mathcal{F}_{ShaConv})$-hybrid model with security parameter $\kappa.$
 \end{proposition}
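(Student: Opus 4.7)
The plan is to prove correctness and security separately, with both arguments riding on the fact that lifting to $\mathbb{Z}_{N'}$ with $N'>N^2$ makes the product of two $N$-sized integers exact, so that the ``wrap count'' $\sigma$ defined by $ab = c + \sigma N$ can be extracted as an honest secret sharing.

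For correctness, I would first fix the canonical representatives $a,b,c \in \{0,\dots,N-1\}$ coming from the given triple and observe that $c \equiv ab \pmod N$ gives an exact integer identity $ab = c + \sigma N$ with $0 \le \sigma \le N-2$. Invoking $\mathcal{F}_{\mathrm{LiftMod}}$ on each of the three shares yields $[a]^{N'},[b]^{N'},[c]^{N'}$ that secret-share the same integers modulo $N'$. Because $ab < N^2 < N'$, the multiplication $[a]^{N'}\cdot[b]^{N'}$ (realized by the standard shared-value multiplication modulo $N'$) produces $[ab]^{N'}$ without any modular reduction, so the local subtraction yields exactly $[\sigma N]^{N'}$. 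The call to $\mathcal{F}_{\mathrm{ShaConv}}$ then returns $[\sigma N \bmod Q]^Q$, and the final step computes $[c']^Q = [c]^Q + [\sigma N]^Q \equiv c + \sigma N \equiv ab \pmod Q$, as required. I would also check the applicability conditions of $\mathrm{ShaConv}$ hold at each invocation: the triple-lifting step $\mathcal{F}_{\mathrm{ShaConv}}$ on $[a]^N,[b]^N,[c]^N$ uses $n<Q<N$, while the step converting $[\sigma N]^{N'}$ to $\mathbb{Z}_Q$ uses $n<Q<N'$, and both modular domains are either prime or RSA moduli by the parameter choices.

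For security in the $(\mathcal{F}_{\mathrm{LiftMod}},\mathcal{F}_{\mathrm{ShaConv}})$-hybrid model, I would exhibit a straight-line simulator $\mathcal{S}$ that on input of the adversary's shares of $([a]^N,[b]^N,[c]^N)$ and the functionality's output shares of $([a]^Q,[b]^Q,[c']^Q)$ mimics the real protocol transcript. Each of the three calls to the ideal functionalities returns only fresh shares of already-committed values, which $\mathcal{S}$ can perfectly simulate by sampling uniform shares consistent with whatever the adversary already holds; the local operations (subtraction of shares and addition of the final $[\sigma N]^Q$ to $[c]^Q$) produce no transcript. The only remaining interactive step is the multiplication $[a]^{N'}\cdot[b]^{N'}$; since this is an instance of the standard Beaver-triple multiplication modulo $N'$, assumed already to securely realize its own ideal multiplication functionality, its simulation follows from that of $\mathcal{F}_{\mathrm{MulTri}}$ and introduces no new leakage. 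Composing the simulators via the UC composition theorem gives indistinguishability.

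The main obstacle I anticipate is not in either argument individually but in the careful bookkeeping of which domain supports which subprotocol, and in particular verifying that the size conditions $n<Q<N<\sqrt{N'}$ and ``$N'$ is prime or an RSA modulus'' propagate correctly through every invocation of $\mathcal{F}_{\mathrm{ShaConv}}$ and the underlying $\mathcal{F}_{\mathrm{LiftWrap}},\mathcal{F}_{\mathrm{DropMod}}$ (where the latter requires the masked value to stay below the destination modulus). A secondary subtlety is that in the malicious model one must argue that even if the adversary supplies inconsistent shares to $\mathcal{F}_{\mathrm{LiftMod}}$ and $\mathcal{F}_{\mathrm{ShaConv}}$, the output triple is still consistent with \emph{some} well-defined $(a,b,c)$ pair; this follows because each functionality reconstructs a unique value from the supplied shares before re-sharing, so the identity $ab=c+\sigma N$ is preserved by construction, and the MAC-checking layer of SPDZ catches any later tampering. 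With these observations in place, the proof reduces to assembling Propositions for $\mathrm{LiftMod}$ and $\mathrm{ShaConv}$ together with a standard hybrid argument.
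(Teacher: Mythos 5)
Your proposal is correct and follows essentially the same route as the paper: correctness via the exact integer identity $ab=c+\sigma N$ lifted to $\mathbb{Z}_{N'}$ with $N'>N^2$ so that $[\sigma N]^{N'}=[ab-c]^{N'}$ can be converted by $\mathcal{F}_{ShaConv}$ and added back modulo $Q$, and security via a straight-line simulation in the $(\mathcal{F}_{LiftMod},\mathcal{F}_{ShaConv})$-hybrid model since only functionality outputs and local computations occur. You are in fact more careful than the paper on two points: the correct bound on $\sigma$ (the paper's text misstates $|\sigma|\leq n-1$, whereas $\sigma$ can be as large as roughly $N$, which is exactly why $\mathrm{ShaConv}$ rather than $\mathrm{DropMod}$ must be used on $[\sigma N]^{N'}$), and the observation that the shared multiplication modulo $N'$ is an interactive step not covered by the two stated hybrid functionalities and must separately be assumed to realize its own ideal multiplication functionality.
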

\begin{proof}
The correctness follows from the discussion in Section~\ref{TripleConv} while the security follows from the fact that aside from the determination of the value of $N'$ which is independent of all private values, all calculations are done without revealing any intermediate values and are secure under $(\mathcal{F}_{LiftMod},\mathcal{F}_{ShaConv})$-hybrid assumption.
\end{proof}
\subsection{Probabilistic Bit Generation}\label{subapp:PBGcorsec}
In this section we provide the correctness and security claims for $\mathrm{PrRndBit}.$

\begin{proposition}\label{prop:PBG}
 The protocol $\mathrm{PrRndBit}$ correctly and securely realizes the functionality $\mathcal{F}_{PrRndBit}$ in the $\mathcal{F}_{GEZ}$-hybrid model with security parameter $\kappa.$
 \end{proposition}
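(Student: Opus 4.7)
The plan is to establish correctness by a simple counting argument on the uniform residue $a$, and then security by observing that the only interactive step in the protocol is the single call to $\mathcal{F}_{GEZ}$ on a value masked by a fresh uniform sharing.

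First I would argue correctness. In Step~1 the parties deal a random sharing $[a]^Q$; provided at least one honest participant contributes a uniformly sampled share, the reconstructed $a$ is uniform in $\mathbb{Z}_Q$ from the adversary's view. The offset $\lfloor p\cdot Q\rfloor$ is public, so $[a - \lfloor p\cdot Q\rfloor]^Q$ is obtained locally without communication. By the correctness of $\mathcal{F}_{GEZ}$ with bit length $\ell = \lceil \log_2 Q \rceil$, the returned bit $b$ equals $1$ exactly when the signed integer representative of $a - \lfloor p\cdot Q\rfloor$ is non-negative, i.e.\ when $a \in \{\lfloor p\cdot Q\rfloor, \ldots, Q-1\}$. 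This set has cardinality $Q - \lfloor p\cdot Q\rfloor$, yielding $\Pr[b = 1] = 1 - p^\ast$ and $\Pr[b = 0] = p^\ast$ with $p^\ast = \lfloor p\cdot Q\rfloor / Q$, matching the Bernoulli distribution produced by $\mathcal{F}_{PrRndBit}$ (up to the indicator convention consistent with the dropout layer usage).

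Next I would argue security in the $\mathcal{F}_{GEZ}$-hybrid model via a standard simulator. Given a corrupted subset $C$, the simulator $\mathcal{S}$ first queries $\mathcal{F}_{PrRndBit}$ to obtain the shares $\{b_i\}_{i\in C}$. It then simulates Step~1 by dealing uniformly random shares to the corrupted parties on behalf of the honest parties, and receiving the corrupted contributions honestly; this is a perfect simulation of the random sharing of $[a]^Q$ since at least one honest share guarantees uniformity regardless of adversarial inputs. When the adversary invokes $\mathcal{F}_{GEZ}$ in Step~2, $\mathcal{S}$ intercepts and replies with $\{b_i\}_{i\in C}$. Because the protocol reveals no intermediate values other than the two hybrid calls, and both calls are distributed identically under simulation and in the real execution, the adversary's views are indistinguishable.

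The main obstacle will be pinning down the sign-convention interface between $\mathrm{GEZ}$, which decides non-negativity with respect to the symmetric representation $\{-(Q-1)/2, \ldots, (Q-1)/2\}$ together with bit length $\ell$, and the value $a - \lfloor p\cdot Q\rfloor$ which is conceptually treated as an element of the non-negative representation $\{0, \ldots, Q-1\}$. One must verify that with $\ell = \lceil \log_2 Q \rceil$, the residues of $\mathbb{Z}_Q$ corresponding to $a < \lfloor p\cdot Q\rfloor$ are exactly those decoded by $\mathrm{GEZ}$ as negative, so that the residue-counting argument translates cleanly into the claimed Bernoulli probability. A secondary subtle point is confirming that the ``random sharing'' step of $[a]^Q$ against a malicious dealer indeed produces a uniform $a$; this follows because any single honest party's uniformly random share additively rerandomizes the sum, independently of the corrupted contributions.
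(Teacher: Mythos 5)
Your proposal is correct and follows the same basic route the paper intends: correctness by counting the residues $a \in \mathbb{Z}_Q$ with $a \geq \lfloor pQ \rfloor$ (giving the Bernoulli parameter $p^\ast = \lfloor pQ\rfloor/Q$), and security from the fact that the only interaction is the single call to $\mathcal{F}_{GEZ}$ on a freshly masked uniform value. In fact your writeup is more careful than the paper's own proof of this proposition, which is a copy-paste artifact: it claims correctness ``from the discussion in Section~\ref{TripleConv}'' and security under the $(\mathcal{F}_{LiftMod},\mathcal{F}_{ShaConv})$-hybrid assumption, neither of which is relevant to $\mathrm{PrRndBit}$; the intended references are Section~\ref{ProbBitGen} and the $\mathcal{F}_{GEZ}$-hybrid model, which is exactly what you argue. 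Your flagged concern about the sign convention is also well placed --- the protocol and the functionality in the paper actually state opposite conventions ($b=0$ versus $b=1$ with probability $p$), and the bit length $\ell=\lceil\log_2 Q\rceil$ leaves no slack of the form $Q \geq 2^{k+\kappa}$ assumed elsewhere for $\mathrm{GEZ}$ --- but these are defects of the paper's statement rather than gaps in your argument.
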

\begin{proof}
The correctness follows from the discussion in Section~\ref{TripleConv} while the security follows from the fact that aside from the determination of the value of $N'$ which is independent of all private values, all calculations are done without revealing any intermediate values and are secure under $(\mathcal{F}_{LiftMod},\mathcal{F}_{ShaConv})$-hybrid assumption.
\end{proof}

\end{document}